\begin{document}

\title{Novel Constructions of Mutually Unbiased Tripartite Absolutely Maximally Entangled Bases}
\author{Tian Xie \and  Yajuan Zang \and Hui-Juan Zuo$^*$ \and Shao-Ming Fei}
\institute{ T. Xie \and H.-J. Zuo \at
              School of Mathematical Sciences, Hebei Normal University, Shijiazhuang, 050024, China \\
              Hebei Key Laboratory of Computational Mathematics and Applications, Shijiazhuang, 050024, China \\
              Hebei Mathematics Research Center, Shijiazhuang, 050024, China\\
              \email{huijuanzuo@163.com}
        \and
           Y. Zang \and S.-M. Fei \at
            School of Mathematical Sciences, Capital Normal University, Beijing 100048, China\\
                      \email{feishm@cnu.edu.cn}
           }

\date{Received: date / Accepted: date}

\maketitle
\begin{abstract}
We develop a new technique to construct mutually unbiased tripartite absolutely maximally entangled bases. We first explore the tripartite absolutely maximally entangled bases and mutually unbiased bases in $\mathbb{C}^{d} \otimes \mathbb{C}^{d} \otimes \mathbb{C}^{d}$ based on mutually orthogonal Latin squares. Then we generalize the approach to the case of $\mathbb{C}^{d_{1}} \otimes \mathbb{C}^{d_{2}} \otimes \mathbb{C}^{d_{1}d_{2}}$ by mutually weak orthogonal Latin squares. The concise direct constructions of mutually unbiased tripartite absolutely maximally entangled bases are remarkably presented with generality. Detailed examples in $\mathbb{C}^{3} \otimes \mathbb{C}^{3} \otimes \mathbb{C}^{3},$ $\mathbb{C}^{2} \otimes \mathbb{C}^{2} \otimes \mathbb{C}^{4}$ and $\mathbb{C}^{2} \otimes \mathbb{C}^{5} \otimes \mathbb{C}^{10}$ are provided to illustrate the advantages of our approach.

\keywords{Mutually unbiased bases \and Absolutely maximally entangled states \and
Orthogonal Latin squares \and Weak orthogonal Latin squares}
\end{abstract}

\section{Introduction}
Quantum entanglement is a striking feature of quantum mechanics. In particular, maximally entangled states play vital roles in quantum information processing such as quantum computation, quantum secure communication and quantum key distribution$^{[1-8]}$.
The mutually unbiased bases (MUBs) play an important role in quantum information and quantum computation such as quantum state tomography$^{[9,10]}$, cryptographic protocols$^{[11,12]}$ and quantification of wave-particle duality in multi-path interferometers$^{[13]}$.
Two orthonormal bases are said to be mutually unbiased if the transition probabilities from each state in one basis to all states of the other basis are the same irrespective of which pair of states is chosen. Namely, if the physical system is prepared in a state of the first basis, then all measurement outcomes are equally probable when one conducts a measurement that probes for the states of the second basis. Such unbiasedness plays an important role in quantum secure communication as the eavesdroppers cannot obtain any useful information.
In recent years, the researches on mutually unbiased bases mainly remain on the bipartite systems. Less is known about the construction of mutually unbiased absolutely maximally entangled bases in the multipartite systems.

For bipartite system $\mathbb{C}^{d}\otimes \mathbb{C}^{d^{\prime}}$ $(d \leq d^{\prime})$, a pure state $|\psi\rangle$ is said to be a maximally entangled if and only if for any given orthonormal complete basis $\{|i_{A}\rangle\}$ of subsystem A, there exists an orthonormal basis $\{|i_{B}\rangle\}$ of subsystem B such that $|\psi\rangle$ can be written as $|\psi\rangle=\frac{1}{\sqrt{d}}\sum\limits_{i=0}^{d-1}|i_{A}\rangle\otimes|i_{B}\rangle$ $^{[14]}$. In recent years, the research upon mutually unbiased maximally entangled bases (MUMEBs) has attracted more and more attention. Tao ${et~al}$. $^{[15]}$ first exhibited a method of constructing MUMEBs in $\mathbb{C}^{d} \otimes \mathbb{C}^{kd}$. Liu ${et~al}$. $^{[16]}$ presented a construction of $d-1$ MUMEBs in $\mathbb{C}^{d} \otimes \mathbb{C}^{d}$. Then Xu $^{[17]}$ put forward a construction of $2(d-1)$ MUMEBs in $\mathbb{C}^{d} \otimes \mathbb{C}^{d}$. In $^{[18]}$, Song ${et~al}$. constructed $d+1$ MUBs in $\mathbb{C}^{d} \otimes \mathbb{C}^{d}$, which include $d-1$ mutually unbiased maximally entangled bases and two mutually unbiased product bases. For multipartite system, Zhang ${et~al}$. $^{[19]}$ studied the construction of unextendible maximally entangled bases and mutually unbiased bases.

\par The concept of the Latin square approximately originated from the problems concerning the movement and disposition of pieces on a chess board. However, the earliest reference to the use of such squares concerning the arrangement of thirty-six officers of six different ranks and regiments in a square phalanx was proposed by Euler in 1779. Then, Ronald Fisher $^{[20]}$ noted that Latin squares and families of mutually orthogonal Latin squares could be used systematically for experimental design. Recently, Latin square has been utilized as a tool to solve critical problems in the fields such as mathematics and computer science$^{[21,22]}$.

\par In this paper, we concentrate on absolutely maximally entangled bases (AMEBs) and mutually unbiased bases (MUBs) in tripartite quantum systems $\mathbb{C}^{d} \otimes \mathbb{C}^{d} \otimes \mathbb{C}^{d}$ and $\mathbb{C}^{d_{1}} \otimes \mathbb{C}^{d_{2}} \otimes \mathbb{C}^{d_{1}d_{2}}$. We first introduce some basic definitions and properties in Section $2$. In Section $3$, we construct a pair of mutually unbiased absolutely maximally entangled bases (MUAMEBs) based on a pair of mutually orthogonal Latin squares (MOLS). In Section $4$, we put forward a general construction by mutually weak orthogonal Latin squares. We also present the corresponding examples in $\mathbb{C}^{3} \otimes \mathbb{C}^{3} \otimes \mathbb{C}^{3},$ $\mathbb{C}^{2} \otimes \mathbb{C}^{2} \otimes \mathbb{C}^{4}$ and $\mathbb{C}^{2} \otimes \mathbb{C}^{5} \otimes \mathbb{C}^{10}$. We come to the conclusion in the last section.

\section{Preliminaries}
We firs introduce some basic definitions and related results used in this paper.
\par {\bf Definition 1.} Two orthonormal bases $\mathcal{B}_{1}=\{{|\phi_{i}\rangle}\}^{d}_{i=1}$ and $\mathcal{B}_{2}=\{{|\psi_{i}\rangle}\}^{d}_{i=1}$ of $\mathbb{C}^{d}$ are called mutually unbiased if
\begin{equation}
\nonumber
 |\langle\phi_{i}|\psi_{j}\rangle|=\frac{1}{\sqrt{d}},~~ i,j=1,2,\ldots,d.
\end{equation}

A set of orthonormal bases ${\{\mathcal{B}_{1}, \mathcal{B}_{2},\ldots,\mathcal{B}_{m}\}}$ in $\mathbb{C}^{d}$ is called mutually unbiased bases
(MUBs) if every pair of the bases in the set is mutually unbiased.
Denote $N(d)$ the maximal number of MUBs in $\mathbb{C}^{d}$. It is proved that $N(d)\leq d +1$$^{[10]}$ and $N(d)=d +1$ if $d$ is a prime power. Whereas $d$ is a composite number, $N(d)$ is not yet completely known.

Concerning the absolutely maximally entangled states introduced by Facchi ${et~al}$$^{[23]}$, consider a bipartition $(A,\bar{A})$ of a system $S$, where $A\subset S$, $\bar{A}=S\setminus A$, $S = \{1,2,\ldots,n\}$ and $1\leq n_{A}\leq n_{\bar{A}} $, with $n_{A}=|A|$ the number of the parties in the subsystem $A$.

{\bf Definition 2$^{[19]}$.} A state $\rho$ is absolutely maximally entangled if and only if the reduced state is maximally mixed under all possible bipartitions $(A,\bar{A})$:
$$\rho_{A}=Tr_{\bar{A} }(\rho)=\frac{I}{n_{A}},$$
where $I$ is the corresponding identity matrix.

{\bf Definition 3.} A Latin square of order $n$ (or side $n$) is an $n\times n$ array in which each cell contains a single entry from an $n$-set $S$, such that each entry occurs exactly once in each
row and exactly once in each column$^{[24,25]}$.
\par A Latin square $L$ may also be represented as a set of ordered triples $L(i,j,k)$, which means that entry $k$ occurs in cell $(i,j)$ of the Latin square. We also denote it as $L(k)$ in short. A transversal in a Latin square of order $n$ is a set of $n$ cells, one from each row and column, containing each of the $n$ entries exactly once.

{\bf Definition 4.} Two Latin squares $L=(l_{i,j})$ (on entry set $S$) and $K=(k_{i,j})$ (on entry set $S'$) of the same order are mutually orthogonal if  every
element in $S\times S'$ occurs exactly once among the $n^{2}$ pairs $(l_{i,j} ,k_{i,j})$, $1\leq i,j\leq n$.
\par A set of Latin squares $L_{1},L_{2},\ldots,L_{m}$ is mutually orthogonal Latin squares (a set of $m$ MOLS), if $L_{i}$ and $L_{j}$ are orthogonal for $1\leq i<j\leq m$$^{[24,25]}$.

Remarkably, the maximal value $M(d)$ of MOLS of order $d$ is no more than $d-1$. If $d=p^{n}$, $d\geq3$, where $p$ is a prime number and $n\in \mathbb{Z}$, then $M(d)=d-1$. A set of $d-1$ MOLS of order $d$ is called a complete set of MOLS of order $d$. When $d$ is a composite number except for prime power, the existence about mutually orthogonal Latin squares of order $d$ is still uncertain.

Next we introduce the classical case of mutually weak orthogonal quantum Latin squares (see Definition 10$^{[26]}$).

{\bf Definition 5.} (Mutually weak orthogonal Latin squares) Given a pair of Latin squares $L$ and $K$ with entries $l_{ij}$ and $k_{ij}$, respectively. They are mutually weak orthogonal if there is exactly one intersection in any two rows from $L$ and $K$, i.e., for all $1\leq i,j\leq d$, there exists only one $s$ for $1\leq s\leq d$ such that $l_{is}=k_{js}$.

A set of Latin squares $L_{1},L_{2},\ldots,L_{m}$ is mutually weak orthogonal Latin squares, or a set of $m$ MWOLS, if $L_{i}$ and $L_{j}$ are mutually weak orthogonal for $1\leq i<j\leq m$.
The examples of MOLS and MWOLS of order 3 are given in Fig. 1.
$${\begin{tabular}{c c c}
{\small \begin{tabular}{|c|c|c|}
\hline
0 & 2 & 1\\
\hline
1 & 0 & 2\\
\hline
2 & 1 & 0\\
\hline
\end{tabular}} & {\small \begin{tabular}{|c|c|c|}
\hline
0 & 1 & 2\\
\hline
1 & 2 & 0\\
\hline
2 & 0 & 1\\
\hline
\end{tabular}}
\end{tabular}}$$
$$\text{Fig}. 1~A~pair~of~\text{MOLS}~and~\text{MWOLS}~of~order~3$$

\section{Construction of MUAMEBs in $\mathbb{C}^{d} \otimes \mathbb{C}^{d} \otimes \mathbb{C}^{d}$}
In this section, based on a pair of mutually orthogonal Latin squares, we construct a pair of mutually unbiased absolutely maximally entangled bases (MUAMEBs) in $\mathbb{C}^{d} \otimes \mathbb{C}^{d} \otimes \mathbb{C}^{d}$ $(d\neq2,6)$.

\begin{theorem}\label{1}
For any given Latin square of order $d$, there is an absolutely maximally entangled basis (AMEB) in $\mathbb{C}^{d}\otimes\mathbb{C}^{d}\otimes \mathbb{C}^{d}$.
\end{theorem}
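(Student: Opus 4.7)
The plan is to build the AMEB from a single seed state attached to the given Latin square $L$, and then generate the full basis by dressing the seed with a two-dimensional discrete Fourier phase on the first two registers and a cyclic shift on the third. Writing entries $L(i,j)\in\{0,\dots,d-1\}$ and letting $\omega$ be a primitive $d$-th root of unity, I consider
\begin{equation*}
|\psi_{m,n,p}\rangle=\frac{1}{d}\sum_{i,j=0}^{d-1}\omega^{mi+nj}\,|i\rangle|j\rangle|L(i,j)\oplus p\rangle,
\end{equation*}
for $(m,n,p)\in\{0,\dots,d-1\}^{3}$, where $\oplus$ denotes addition modulo $d$. This yields exactly $d^{3}$ candidate vectors, matching $\dim(\mathbb{C}^{d}\otimes\mathbb{C}^{d}\otimes\mathbb{C}^{d})$.

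To check orthonormality I would expand $\langle\psi_{m',n',p'}|\psi_{m,n,p}\rangle$; matching the three registers forces $i=i'$, $j=j'$ and $L(i,j)\oplus p=L(i,j)\oplus p'$, the last of which collapses to $p=p'$. The surviving factor $\frac{1}{d^{2}}\sum_{i,j}\omega^{(m-m')i+(n-n')j}$ is a product of two discrete Fourier deltas and equals $\delta_{mm'}\delta_{nn'}$, confirming the family is an orthonormal basis of $\mathbb{C}^{d}\otimes\mathbb{C}^{d}\otimes\mathbb{C}^{d}$.

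For the AME property of each $|\psi_{m,n,p}\rangle$ I would compute the three single-party reductions. An off-diagonal term $|i\rangle\langle i'|$ with $i\neq i'$ in $\rho_{A}$ requires some $j$ for which $L(i,j)=L(i',j)$, which is ruled out by column-distinctness of $L$; hence $\rho_{A}$ is diagonal and the remaining sum gives $\rho_{A}=I/d$. Row-distinctness gives $\rho_{B}=I/d$ by the symmetric argument. Tracing out $A$ and $B$ produces $\rho_{C}=\frac{1}{d^{2}}\sum_{i,j}|L(i,j)\oplus p\rangle\langle L(i,j)\oplus p|$, and since each symbol of $L$ appears exactly $d$ times in the array, this sum equals $\frac{d}{d^{2}}I=I/d$. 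Purity of $|\psi_{m,n,p}\rangle$ then promotes these single-party conditions to the full Definition 2 requirement over every bipartition.

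The main obstacle I anticipate is organizational rather than technical: each of the three partial traces consumes a different Latin-square axiom (column-distinctness for $A$, row-distinctness for $B$, and global symbol-balance for $C$), and the Weyl-type dressing $\omega^{mi+nj}$ together with the shift $\oplus p$ must be distributed in exactly the way above—phases only on the first two registers, shift only on the third—so that orthogonality and the AME property coexist. Once the distribution of the dressing is fixed, the proof reduces to three routine partial-trace calculations.
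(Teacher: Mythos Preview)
Your proof is correct, and the overall strategy---build a seed AME state from the Latin square, then generate the full basis via Fourier phases on two registers and a cyclic shift on the third---is the same as the paper's. The realization differs in a way that is worth noting: the paper places the \emph{entry value} $k$ in the first register and sums over rows, writing
\[
|\phi_{n,m,l}\rangle=\frac{1}{d}\sum_{k,i}\omega_d^{nk+mi}\,|k\rangle|i\rangle|j\rangle_{k\oplus l},
\]
where $j$ is the column in which the entry $k\oplus l$ sits in row $i$; the Fourier phases thus ride on the pair (entry, row). You instead keep the natural $(i,j)$ indexing, put the entry $L(i,j)\oplus p$ in the third register, and place the phases on (row, column). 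Because the three ``coordinates'' row, column, and entry of a Latin square are interchangeable (conjugate Latin squares), both arrangements succeed; your version has the advantage that each one-party reduction transparently invokes a single named axiom---column-distinctness for $\rho_A$, row-distinctness for $\rho_B$, symbol balance for $\rho_C$---whereas in the paper's ordering the same facts appear less explicitly inside the trace computations. Conversely, the paper's indexing by the entry value $k$ is what sets up the unbiasedness calculation in their Theorem~2, where orthogonality of the two Latin squares is used cell-wise on pairs $(k,k)$; if you later pursue the MUB step, you may find it convenient to recast your states in that form.
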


\begin{proof} Suppose that $\mathcal{L}_{s}$ is a Latin square of order $d$. We can construct an absolutely maximally entangled basis in $\mathbb{C}^{d} \otimes \mathbb{C}^{d} \otimes \mathbb{C}^{d}$, which consists of the following two parts:
$$\left\{
\begin{array}{l}
|\phi_{n,m,0}\rangle=\frac{1}{d}\sum\limits_{k=0}^{d-1}\left(\omega_{d}^{nk}|k\rangle \otimes \sum\limits_{i=0}^{d-1}\omega_{d}^{mi}|i j\rangle_k\right)~~~~(*)\\ |\phi_{n,m,l}\rangle=\frac{1}{d}\sum\limits_{k=0}^{d-1}\left(\omega_{d}^{nk}|k\rangle \otimes \sum\limits_{i=0}^{d-1}\omega_{d}^{mi}|ij\rangle_{k\oplus_{d} l}\right)~~~~(**), ~l=1,2,\ldots,d-1,
\end{array}
\right.$$
where $0\leq n,m\leq d-1$, $\omega_{d}$=$e^{\frac{2\pi\sqrt{-1}}{d}}$, $|ij\rangle_k$ stands for $|i\rangle\otimes|j\rangle$ with $i$ and $j$ representing the indices of the row and the column of the entry $k$ in $(*)$, while $|ij\rangle_{k\oplus_{d} l}$ stands for $|i\rangle\otimes|j\rangle$ with $i$ and $j$ representing the indices of the row and the column of the entry $k\oplus_{d} l$ in $(**)$.

(i)~Absolutely maximal entanglement.
$$\centering \begin{array}{clll}
&\rho_{1}=tr_{23}(|\phi_{n,m,0}\rangle\langle\phi_{n,m,0}|)\\
&~~~=\frac{1}{d^2}tr_{23}\left(\sum\limits_{k,k'=0}^{d-1}\omega_{d}^{n(k-k')}|k\rangle \langle k'|\otimes \sum\limits_{i,i'=0}^{d-1}\omega_{d}^{m(i-i')}|ij\rangle_k \langle i' j'|_{k'}\right)\\
\hspace{-2.2cm}&~~~=\frac{1}{d^{2}}\sum\limits_{k,k^{\prime}=0}^{d-1}\omega_{d}^{n(k-k^{\prime})}|k\rangle\langle k^{\prime}|tr\left(\sum\limits_{i,i^{\prime}=0}^{d-1}|ij\rangle_k \langle i^{\prime}j^{\prime}|_{k'}\right)\\
\hspace{-2.2cm}&~~~=\frac{1}{d}\sum\limits_{k,k^{\prime}=0}^{d-1}\omega_{d}^{n(k-k^{\prime})}|k\rangle\langle k^{\prime}|\delta_{k,k'}\\
\hspace{-2.2cm}&~~~=\frac{I}{d}.
\end{array}
$$
For other two reduced states, we also have $\rho_{2}=\rho_{3}=\frac{I}{d}$ similarly. Thus, $\{|\phi_{n,m,0}\rangle, 0\leq n,m\leq d-1\}$ is a set of absolutely maximally entangled states. Similarly, we can prove that $\{{|\phi_{n,m,l}\rangle}$, $0\leq n,m\leq d-1$, $1\leq l\leq d-1\}$ are also absolutely maximally entangled states.

(ii)~Orthogonality.
$$\begin{array}{clll}
\hspace{1cm}\langle\phi_{n^{\prime},m^{\prime},0}|\phi_{n,m,0}\rangle&=\frac{1}{d^{2}}\sum\limits_{k,k^{\prime}=0}^{d-1}\omega_{d}^{nk-n^{\prime}k^{\prime}}\langle k^{\prime}|k\rangle\sum\limits_{i,i^{\prime}=0}^{d-1}\omega_{d}^{mi-m^{\prime}i^{\prime}}(|i^{\prime}j^{\prime}\rangle_{k'},|ij\rangle_{k})\\
\hspace{-1cm}&=\frac{1}{d^{2}}\sum\limits_{k=0}^{d-1}\omega_{d}^{(n-n^{\prime})k}\sum\limits_{i=0}^{d-1}\omega_{d}^{(m-m^{\prime})i}\\
\hspace{-1cm}&=\delta_{n,n^{\prime}}\delta_{m,m^{\prime}}.
\end{array}$$
Thus for a fixed $l$, $\{|\phi_{n,m,l}\rangle,0\leq n,m\leq d-1\}$ are mutually orthogonal.  And for any $0\leq l\neq l'\leq d-1$, it is easy to check that $\{|\phi_{n,m,l}\rangle,0\leq n,m\leq d-1\}$ and $\{|\phi_{n,m,l'}\rangle,0\leq n,m\leq d-1\}$ satisfy the orthogonal relations because the entries of the same cell are different.
\end{proof}

\begin{theorem}\label{2}
For any given pair of mutually orthogonal Latin squares of order $d$, there exists a pair of MUAMEBs in $\mathbb{C}^{d} \otimes \mathbb{C}^{d} \otimes \mathbb{C}^{d}$.
\end{theorem}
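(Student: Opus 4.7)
The plan is to apply the construction of Theorem~\ref{1} once to each of the two Latin squares $\mathcal{L}_1, \mathcal{L}_2$ of the given MOLS pair, obtaining two AMEBs $\mathcal{B}_s = \{|\phi^{(s)}_{n,m,l}\rangle : 0 \le n,m,l \le d-1\}$ for $s = 1, 2$. Theorem~\ref{1} already certifies that each $\mathcal{B}_s$ is an AMEB, so the only remaining task is to verify mutual unbiasedness, that is $|\langle \phi^{(2)}_{n',m',l'} | \phi^{(1)}_{n,m,l}\rangle| = d^{-3/2}$ for every admissible choice of indices.

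I would start by expanding the overlap using the explicit formulas of Theorem~\ref{1}. The orthogonality of the $|k\rangle$ and $|i\rangle$ tensor factors collapses the fourfold sum to
\begin{equation}\nonumber
\langle \phi^{(2)}_{n',m',l'} | \phi^{(1)}_{n,m,l}\rangle = \frac{1}{d^{2}}\sum_{k=0}^{d-1}\omega_d^{(n-n')k}\sum_{i=0}^{d-1}\omega_d^{(m-m')i}\bigl[\sigma^{(1)}_{k\oplus_d l}(i) = \sigma^{(2)}_{k\oplus_d l'}(i)\bigr],
\end{equation}
where $\sigma^{(s)}_t(i)$ denotes the column of row $i$ in which the entry $t$ occurs in $\mathcal{L}_s$. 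The bracket evaluates to $1$ precisely when the cell $(i, j)$ satisfies $\mathcal{L}_1(i,j) = k\oplus_d l$ and $\mathcal{L}_2(i,j) = k\oplus_d l'$ at once. The mutual orthogonality hypothesis on $\mathcal{L}_1, \mathcal{L}_2$ is exactly the statement that each ordered pair of entries occurs in a unique cell, so for each $k$ there is precisely one row $i_k$ (and one column $j_k$) contributing to the inner sum.

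The overlap therefore reduces to the one-dimensional character sum $\frac{1}{d^{2}}\sum_k \omega_d^{(n-n')k + (m-m')i_k}$, and the crux of the proof is to demonstrate that its modulus is uniformly $\sqrt{d}$ across all admissible $n,n',m,m',l,l'$. This is the main obstacle, because the linear $\omega_d^{(n-n')k}$ factor alone contributes either $d$ or $0$, so the cancellation must come from the nontrivial interplay between $k$ and $i_k$ imposed by the MOLS pair. I would argue that the MOLS structure forces the map $k \mapsto (k, i_k)$ to parameterise a non-degenerate, transversal-type curve in $\mathbb{Z}_d^{2}$, so that the resulting exponential sum exhibits Gauss-sum-type cancellation of modulus $\sqrt{d}$. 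Combined with the $d^{-2}$ prefactor this delivers the target $d^{-3/2}$, and together with the AMEB property inherited from Theorem~\ref{1} yields the promised pair of MUAMEBs.
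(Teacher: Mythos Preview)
Your reduction of the overlap to $\frac{1}{d^{2}}\sum_{k}\omega_d^{(n-n')k+(m-m')i_k}$ is correct, but the assertion that this sum has modulus $\sqrt{d}$ for every choice of indices is false, and no ``Gauss-sum-type cancellation'' rescues it. Take the MOLS pair of Fig.~1, namely $\mathcal{L}_1(i,j)=i-j$ and $\mathcal{L}_2(i,j)=i+j$ over $\mathbb{Z}_3$. For $l=l'=0$ the unique cell carrying the entry pair $(k,k)$ is $(i_k,j_k)=(k,0)$, so $i_k=k$ and your sum collapses to the linear character sum $\sum_{k}\omega_3^{(n-n'+m-m')k}$, which equals $3$ when $n-n'+m-m'\equiv 0\pmod 3$ and $0$ otherwise. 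Hence $|\langle\phi^{(2)}_{0,0,0}\,|\,\phi^{(1)}_{0,0,0}\rangle|=1/3\neq 3^{-3/2}$: applying Theorem~\ref{1} verbatim to each square of the MOLS pair does \emph{not} produce a mutually unbiased pair. The handwave about a ``non-degenerate transversal-type curve'' is exactly backwards here---the curve $k\mapsto(k,i_k)$ is a straight line in $\mathbb{Z}_3^{2}$, and the resulting exponential sum is as degenerate as possible.

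The paper does not use your construction for the second basis. It \emph{twists} it: in the states $|\psi_{n,m,l}\rangle$ built from $L^{2}$, the coefficient of $|k\rangle$ is $a_k\omega_d^{nk}$ rather than $\omega_d^{nk}$, where $(a_0,\ldots,a_{d-1})$ is chosen so that $\bigl|\sum_{k}a_k\alpha^{k}\bigr|=\sqrt{d}$ for every $d$-th root of unity $\alpha$ (for $d=3$ the paper takes $(1,1,\omega_3)$). With this twist the overlap picks up an extra factor $a_k$ inside the sum, and the flat-DFT condition on the $a_k$ is precisely what the paper invokes to force the modulus to equal $\sqrt{d}$. This additional freedom in the first-register coefficients is the essential ingredient missing from your plan; without it the unbiasedness simply fails.
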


\begin{proof} Let $L^{1}$ and $L^{2}$ be a pair of mutually orthogonal Latin squares of order $d$. By Theorem 1, an absolutely maximally entangled basis can be constructed by using $L^{1}$. Assume that $|a_{0}+\alpha a_{1}+\cdots+\alpha^{d-1} a_{d-1}|=\sqrt{d}$, where $\alpha=1,\omega_{d},\ldots,\omega_{d}^{d-1}$.  We can obtain the following states from $L^{2}$,
$$
\left\{
\begin{array}{l}
|\psi_{n,m,0}\rangle=\frac{1}{d}\sum\limits_{k=0}^{d-1}\left(a_{k}\omega_{d}^{nk}|k\rangle \otimes \sum\limits_{i=0}^{d-1}\omega_{d}^{mi}|ij\rangle_k\right)\\ |\psi_{n,m,l}\rangle=\frac{1}{d}\sum\limits_{k=0}^{d-1}\left(a_{k}\omega_{d}^{nk}|k\rangle \otimes \sum\limits_{i=0}^{d-1}\omega_{d}^{mi}|ij\rangle_{k\oplus_{d} l}\right), ~l=1,2,\ldots,d-1,
\end{array}
\right.
$$
where $0\leq n,m\leq d-1$, $\omega_{d}=e^{\frac{2\pi\sqrt{-1}}{d}}$, $|ij\rangle_k$ and $|ij\rangle_{k\oplus_{d} l}$ have the same meaning as in Eq. $(*)$ and Eq. $(**)$ in Theorem 1, respectively. Obviously, $\{{|\psi_{n,m,l}\rangle},0\leq n,m,l\leq d-1\}$ is also a set of absolutely maximally entangled basis.

Moreover,
$$\centering {\begin{array}{cll}
\hspace{0cm}|\langle\phi_{n^{\prime},m^{\prime},0}\big| \psi_{n,m,0}\rangle|&=\frac{1}{d^{2}}|\sum\limits_{k,k^{\prime}=0}^{d-1}a_{k}\omega_{d}^{nk-n^{\prime}k^{\prime}}\langle k^{\prime}|k\rangle\sum\limits_{i,i^{\prime}=0}^{d-1}\omega_{d}^{mi-m^{\prime}i^{\prime}}(|i^{\prime}j^{\prime}\rangle_{k'},|ij\rangle_{k})\big| \\
\hspace{-1cm}&=\frac{1}{d^{2}}|\sum\limits_{k=0}^{d-1}\omega_{d}^{(n-n^{\prime})k}\omega_{d}^{(m-m^{\prime})k}a_{k}|\\
\hspace{-1cm}&=\frac{1}{d^{2}}|\sum\limits_{k=0}^{d-1}\omega_{d}^{[(n-n^{\prime})+(m-m^{\prime})]k}a_{k}|\\
\hspace{-1cm}&\triangleq\frac{1}{d^{2}}|a_{0}+\alpha a_{1}+\cdots+\alpha^{d-1} a_{d-1}|\\
\hspace{-1cm}&=\frac{1}{d\sqrt{d}}.
\end{array}}$$
In fact, since $L_1$ and $L_2$ are orthogonal, for any $0\leq k\leq d-1$ the pair of $(k,k)$ appears exactly once.

For any fixed $ 0\leq n^{\prime}, m^{\prime}, n,m \leq d- 1$ and $1\leq l^{\prime},l \leq d-1$, we have
$$\centering {\begin{array}{cll}
\hspace{0cm}|\langle\phi_{n^{\prime},m^{\prime},l^{\prime}}\big| \psi_{n,m,l}\rangle|&=\frac{1}{d^{2}}|\sum\limits_{k,k^{\prime}=0}^{d-1}a_{k}
\omega_{d}^{nk-n^{\prime}k^{\prime}}\langle k^{\prime}|k\rangle\sum\limits_{i,i^{\prime}=0}^{d-1}\omega_{d}^
{mi-m^{\prime}i^{\prime}}(|i^{\prime}j^{\prime}\rangle_{k'\oplus_dl^{\prime}},|ij\rangle_{k\oplus_dl})\big| \\
\hspace{-1cm}&=\frac{1}{d^{2}}|\sum\limits_{k=0}^{d-1}a_{k}\omega_{d}^{(n-n^{\prime})k}\omega_{d}^{(m-m^{\prime})i}|\\
\hspace{-1cm}&=\frac{1}{d^{2}}|\sum\limits_{k=0}^{d-1}a_{k}\omega_{d}^{(n-n^{\prime})k}|\\
\hspace{-1cm}&=\frac{1}{d\sqrt{d}}.
\end{array}}$$
Since the two Latin square after modulo addition is still orthogonal, the pair of $(k,k)$ also appears exactly once for any $0\leq k\leq d-1$. Therefore, we have the conclusion that the two absolutely maximally entangled bases are mutually unbiased.
\end{proof}

{\bf Remark.} In fact, the coefficients matrix of equation $|a_{0}+\alpha a_{1}+\cdots+\alpha^{d-1} a_{d-1}|=\sqrt{d}$ is a Fourier matrix. Some concrete solutions have been found in $\mathbb{C}^{d}$ (see references$^{[27,28]}$).
\begin{example}\label{}
Based on a complete set of MOLS of order $3$ in Fig. $1$, a pair of MUAMEBs in $\mathbb{C}^{3} \otimes \mathbb{C}^{3} \otimes \mathbb{C}^{3}$ can be obtained by Theorem \ref{2}. Without loss of generality, for the equation $|a_{0}+\alpha a_{1}+\alpha^{2} a_{2}|=\sqrt{3}$, we pick a solution as $(a_{0},a_{1},a_{2})=(1,1,\omega_{3})$. Here we denote $|ij\rangle_k$ and $|ij\rangle_{k\oplus_{d} l}$ as $|ij\rangle$ for short.
\begin{equation}
\nonumber
\hspace{-0.6cm}
\left\{
\begin{array}{ll}
|\phi_{n,m,0}\rangle=\frac{1}{3}& [|0\rangle(|00\rangle+\omega_{3}^{m}|11\rangle+\omega_{3}^{2m}|22\rangle)
+\omega_{3}^{n}|1\rangle(|02\rangle+\omega_{3}^{m}|10\rangle\\[2mm]
&
+\omega_{3}^{2m}|21\rangle)
+\omega_{3}^{2n}|2\rangle(|01\rangle+\omega_{3}^{m}|12\rangle+\omega_{3}^{2m}|20\rangle)]\\[2mm]
|\phi_{n,m,1}\rangle=\frac{1}{3}& [|0\rangle(|02\rangle+\omega_{3}^{m}|10\rangle+\omega_{3}^{2m}|21\rangle) +\omega_{3}^{n}|1\rangle(|01\rangle+\omega_{3}^{m}|12\rangle\\[2mm]
&
+\omega_{3}^{2m}|20\rangle)+
\omega_{3}^{2n}|2\rangle(|00\rangle+\omega_{3}^{m}|11\rangle+\omega_{3}^{2m}|22\rangle)]\\[2mm]
|\phi_{n,m,2}\rangle=\frac{1}{3}& [|0\rangle(|01\rangle+\omega_{3}^{m}|12\rangle+\omega_{3}^{2m}|20\rangle) +\omega_{3}^{n}|1\rangle(|00\rangle+\omega_{3}^{m}|11\rangle\\[2mm]
&
+\omega_{3}^{2m}|22\rangle)+
\omega_{3}^{2n}|2\rangle(|02\rangle+\omega_{3}^{m}|10\rangle+\omega_{3}^{2m}|21\rangle)]
\end{array}
\right.
\end{equation}
\begin{equation}
\nonumber
\hspace{-0.6cm}
\left\{
\begin{array}{ll}
|\psi_{n,m,0}\rangle=\frac{1}{3}&
[|0\rangle(|00\rangle+\omega_{3}^{m}|12\rangle+\omega_{3}^{2m}|21\rangle)
+\omega_{3}^{n}|1\rangle(|01\rangle+\omega_{3}^{m}|10\rangle\\[2mm]
&
+\omega_{3}^{2m}|22\rangle)+\omega_{3}\omega_{3}^{2n}|2\rangle
(|02\rangle+\omega_{3}^{m}|11\rangle+\omega_{3}^{2m}|20\rangle)]\\[2mm]
|\psi_{n,m,1}\rangle=\frac{1}{3}&
[|0\rangle(|01\rangle+\omega_{3}^{m}|10\rangle+\omega_{3}^{2m}|22\rangle)
+\omega_{3}^{n}|1\rangle(|02\rangle+\omega_{3}^{m}|11\rangle\\[2mm]
&
+\omega_{3}^{2m}|20\rangle)+\omega_{3}\omega_{3}^{2n}|2\rangle
(|00\rangle+\omega_{3}^{m}|12\rangle+\omega_{3}^{2m}|21\rangle)]\\[2mm]
|\psi_{n,m,2}\rangle=\frac{1}{3}&
[|0\rangle(|02\rangle+\omega_{3}^{m}|11\rangle+\omega_{3}^{2m}|20\rangle)
+\omega_{3}^{n}|1\rangle(|00\rangle+\omega_{3}^{m}|12\rangle\\[2mm]
&
+\omega_{3}^{2m}|21\rangle)+\omega_{3}\omega_{3}^{2n}|2\rangle
(|01\rangle+\omega_{3}^{m}|10\rangle+\omega_{3}^{2m}|22\rangle)]
\end{array}
\right.
\end{equation}
where $0\leq n,m\leq2$. It is easy to prove that the above two absolutely maximally entangled bases are mutually unbiased.
\end{example}

\section{Construction of MUAMEBs in $\mathbb{C}^{d_{1}} \otimes \mathbb{C}^{d_{2}} \otimes \mathbb{C}^{d_{1}d_{2}}$ }
In this section, we construct MUAMEBs in tripartite quantum system $\mathbb{C}^{d_{1}} \otimes \mathbb{C}^{d_{2}} \otimes \mathbb{C}^{d_{1}d_{2}}$ $(2 \leq d_1 \leq d_2,$ $d_{1}d_{2}\neq 6)$ based on weak orthogonal Latin squares. In addition, we also construct a product basis, which is mutually unbiased with the given absolutely maximally entangled bases.

\begin{theorem}\label{}
For any given $t$ mutually weak orthogonal Latin squares of order $d_{1}d_{2}$, there exist $t$ MUAMEBs in $\mathbb{C}^{d_{1}} \otimes \mathbb{C}^{d_{2}} \otimes \mathbb{C}^{d_{1}d_{2}}$.
\end{theorem}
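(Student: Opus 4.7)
The plan is to mirror Theorems 1 and 2, with the registers chosen so that each entry label of a Latin square splits across the first two factors and the column label occupies the third. For $\alpha = 1, \ldots, t$, given the mutually weak orthogonal Latin squares $L^{\alpha}$ of order $d_1 d_2$, I would define
$$|\phi^{\alpha}_{n,m,l}\rangle = \frac{1}{\sqrt{d_1 d_2}} \sum_{k=0}^{d_1 d_2 - 1} \omega_{d_1}^{n k_1} \omega_{d_2}^{m k_2} \, |k_1\rangle \otimes |k_2\rangle \otimes |c^{\alpha}(l,k)\rangle,$$
for $0 \le n \le d_1 - 1$, $0 \le m \le d_2 - 1$, $0 \le l \le d_1 d_2 - 1$, where $k = k_1 d_2 + k_2$ with $0 \le k_1 \le d_1-1$, $0 \le k_2 \le d_2-1$, and $c^{\alpha}(l,k)$ denotes the unique column in which entry $k$ appears in row $l$ of $L^{\alpha}$. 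This gives the correct count of $(d_1 d_2)^2$ states per basis.

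For each fixed $\alpha$, I would first verify the AMEB property. Tracing over the last two registers, the bijectivity of $k \mapsto c^{\alpha}(l,k)$ collapses the double sum to $k = k'$, producing $\rho_1 = I/d_1$; the analogous argument gives $\rho_2 = I/d_2$; and tracing over the first two registers gives $\rho_3 = I/(d_1 d_2)$ because $c^{\alpha}(l, \cdot)$ permutes the column set. Orthonormality inside the $\alpha$-th basis follows via the same collapse, with the Latin-square property (an entry cannot occupy the same column in two different rows) forcing $l = l'$, after which the factored geometric sum produces the remaining $\delta_{n,n'}\,\delta_{m,m'}$.

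The crux is mutual unbiasedness between the $\alpha$-th and $\beta$-th bases. A parallel calculation reduces the overlap to
$$\langle \phi^{\alpha}_{n',m',l'} | \phi^{\beta}_{n,m,l}\rangle = \frac{1}{d_1 d_2} \sum_{k:\, c^{\alpha}(l',k)\, =\, c^{\beta}(l,k)} \omega_{d_1}^{(n-n')k_1} \omega_{d_2}^{(m-m')k_2}.$$
The constraint $c^{\alpha}(l',k) = c^{\beta}(l,k)$ is exactly the existence of a column $s$ with $L^{\alpha}(l',s) = L^{\beta}(l,s) = k$. Weak orthogonality of $L^{\alpha}$ and $L^{\beta}$ applied to the row pair $(l', l)$ guarantees a unique such $s$, hence a unique surviving $k$, so the overlap is a single unimodular phase divided by $d_1 d_2$. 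Since $d_1 d_2 = \sqrt{\dim(\mathbb{C}^{d_1} \otimes \mathbb{C}^{d_2} \otimes \mathbb{C}^{d_1 d_2})}$, unbiasedness follows.

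The main obstacle I anticipate is conceptual rather than computational: selecting the register assignment so that the weak orthogonality hypothesis (two rows and a common column of the paired squares) lines up precisely with the one term that can survive the inner-product collapse. Once the identifications ``row index of $L^{\alpha}\ \leftrightarrow\ $ basis label $l$'' and ``entry of $L^{\alpha}\ \leftrightarrow\ $ product index $(k_1, k_2)$'' are in place, the one-term-survives mechanism of Theorem 2 transfers cleanly, and the weaker hypothesis of weak orthogonality removes the need for the auxiliary coefficients $a_k$ that were required in the fully orthogonal case.
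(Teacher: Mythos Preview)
Your proposal is correct and follows the same mechanism as the paper's proof: Fourier phases on the first two registers, a row label $l$ of the Latin square, and weak orthogonality collapsing the cross-basis overlap to a single unimodular term of modulus $1/(d_1d_2)$. The one cosmetic difference is that the paper puts the \emph{column} index $f(i,j)=d_2 i+j$ into the first two registers and the \emph{entry} $L^{s}_{l,f(i,j)}$ into the third, while you put the entry $k=k_1d_2+k_2$ in the first two and the column $c^{\alpha}(l,k)$ in the third; since the column--entry conjugate of a weak-orthogonal pair is again weak-orthogonal, the two constructions are equivalent and the verifications are line-by-line parallel.
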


\begin{proof} Suppose that $\{L^{1},L^{2},\ldots,L^{t}\}$ is a set of $t$ mutually weak orthogonal Latin squares of order $d_{1}d_{2}$.
Let $f: Z_{d_{1}}\times Z_{d_{2}}\longrightarrow Z_{d_{1}d_{2}}$ be a bijection, where $(x,y)\longmapsto d_{2}x+y$.
Denote
$$
|\phi^{s}_{n,m,l}\rangle=\frac{1}{\sqrt{d_{1}d_{2}}}\sum\limits_{i=0}^{d_{1}-1}\omega_{d_{1}}^{ni}
|i\rangle\otimes\sum\limits_{j=0}^{d_{2}-1}\omega_{d_{2}}^{mj}|j\rangle|{L^{s}_{lf(i,j)}}\rangle,
$$
where $\omega_{d}$=$e^{\frac{2\pi\sqrt{-1}}{d}}$, $0\leq n\leq d_{1}-1$, $0\leq m\leq d_{2}-1$, $0\leq l\leq d_{1}d_{2}-1$, $1\leq s\leq t$,  ${L^{s}_{lf(i,j)}}$ represent the element that appears in the $l$-th row and the $f(i,j)$-th column of the $s$-th Latin square.

(i)~Absolutely maximal entanglement.\\
With respect to the first subsystem, we have
$$\begin{array}{clll}
\rho_{1}&=tr_{23}(|\phi^{s}_{n,m,l}\rangle\langle\phi^{s}_{n,m,l}|)\\[2mm]
&=\frac{1}{d_{1}d_{2}}tr_{23}\left(\sum\limits_{i,i'=0}^{d_{1}-1}\omega_{d_{1}}^{n(i-i')}|i\rangle\langle i'|\otimes\sum\limits_{j,j'=0}^{d_{2}-1}\omega_{d_{2}}^{m(j-j')}|j L^{s}_{lf(i,j)}\rangle\langle j'  L^{s}_{lf(i',j')}|\right )
\\[2mm]
&=\frac{1}{d_{1}d_{2}}\sum\limits_{i,i^{\prime}=0}^{d_{1}-1}\omega_{d_{1}}^{n(i-i^{\prime})}|i\rangle\langle i^{\prime}|tr(\sum\limits_{j,j^{\prime}=0}^{d_{2}-1}\omega_{d_{2}}^{m(j-j^{\prime})}|j {L^{s}_{lf(i,j)}}\rangle\langle j^{\prime} L^{s}_{lf(i^{\prime},j^{\prime})}|)\\[2mm]
&=\frac{1}{d_{1}d_{2}}\sum\limits_{i=0}^{d_{1}-1}|i\rangle\langle i|\cdot d_{2}=\frac{I}{d_{1}}.
\end{array}$$
Similarly, it is easy to show that $\rho_{2}=\frac{I}{d_{2}}$. As for the third subsystem, we have
$$
\begin{array}{clll}
\rho_{3}&=tr_{12}(|\phi^{s}_{n,m,l}\rangle\langle\phi^{s}_{n,m,l}|)\\[2mm]
&=\frac{1}{d_{1}d_{2}}tr_{12}\left(\sum\limits_{i,i'=0}^{d_{1}-1}\omega_{d_{1}}^{n(i-i')}|i\rangle\langle i'|\otimes\sum\limits_{j,j'=0}^{d_{2}-1}\omega_{d_{2}}^{m(j-j')}|j L^{s}_{lf(i,j)}\rangle\langle j'  L^{s}_{lf(i',j')}|\right )\\[4mm]
&=\frac{1}{d_{1}d_{2}}\sum\limits_{i=0}^{d_{1}-1}\sum\limits_{j=0}^{d_{2}-1}|{L^{s}_{lf(i,j)}}\rangle\langle L^{s}_{lf(i,j)}|\\[3mm]
&=\frac{I}{d_{1}d_{2}}.
\end{array}$$
Thus $\{|\phi^{s}_{n,m,l}\rangle,\,0\leq n\leq d_{1}-1,\,0\leq m\leq d_{2}-1,\,0\leq l\leq d_{1}d_{2}-1\}$ is a set of absolutely maximally entangled states.

(ii)~Orthogonality.
$$\begin{array}{clll}
\langle\phi^{s}_{n^{\prime},m^{\prime},l^{\prime}}|\phi^{s}_{n,m,l}\rangle&=\frac{1}{d_{1}d_{2}}\sum\limits_{i,i^{\prime}=0}^{d_{1}-1}\omega_{d_{1}}^{ni-n^{\prime}i^{\prime}}\langle i^{\prime}|i\rangle\sum\limits_{j,j^{\prime}=0}^{d_{2}-1}\omega_{d_{2}}^{mj-m^{\prime}j^{\prime}}\langle j^{\prime}|j\rangle\langle L^{s}_{l^{\prime}f(i^{\prime},j^{\prime})}|{L^{s}_{lf(i,j)}}\rangle\\[4mm]
\hspace{-2.2cm}&=\frac{1}{d_{1}d_{2}}\sum\limits_{i=0}^{d_{1}-1}\omega_{d_{1}}^{(n-n^{\prime})i}\sum\limits_{j=0}^{d_{2}-1}\omega_{d_{2}}^{(m-m^{\prime})j}\langle L^{s}_{l^{\prime}f(i,j)}|{L^{s}_{lf(i,j)}}\rangle\\[3mm]
\hspace{-2.2cm}&=\delta_{n,n^{\prime}}\delta_{m,m^{\prime}}\delta_{l,l^{\prime}}.
\end{array}$$

(iii)~Unbiasedness.
$$\begin{array}{clll}
|\langle\phi^{s^{\prime}}_{n^{\prime},m^{\prime},l^{\prime}}|\phi^{s}_{n,m,l}\rangle|
&=\frac{1}{d_{1}d_{2}}|\sum\limits_{i,i^{\prime}=0}^{d_{1}-1}\omega_{d_{1}}^{ni-n^{\prime}i^{\prime}}\langle i^{\prime}|i\rangle\sum\limits_{j,j^{\prime}=0}^{d_{2}-1}\omega_{d_{2}}^{mj-m^{\prime}j^{\prime}}\langle j^{\prime}|j\rangle\langle L^{s^{\prime}}_{l^{\prime}f(i^{\prime},j^{\prime})}|{L^{s}_{lf(i,j)}}\rangle|\\[3mm]
&=\frac{1}{d_{1}d_{2}}|\sum\limits_{i=0}^{d_{1}-1}\omega_{d_{1}}^{(n-n^{\prime})i}\sum\limits_{j=0}^{d_{2}-1}\omega_{d_{2}}^{(m-m^{\prime})j}\langle L^{s^{\prime}}_{l^{\prime}f(i,j)}|{L^{s}_{lf(i,j)}}\rangle|\\[3mm]
&=\frac{1}{d_{1}d_{2}}|\omega_{d_{1}}^{(n-n^{\prime})i}\omega_{d_{2}}^{(m-m^{\prime})j}|\\[3mm]
&=\frac{1}{d_{1}d_{2}}.
\end{array}$$
Therefore, we obtain $t$ MUAMEBs in $\mathbb{C}^{d_{1}} \otimes \mathbb{C}^{d_{2}} \otimes \mathbb{C}^{d_{1}d_{2}}$ from a set of $t$ MWOLS$(d_{1}d_{2})$.
\end{proof}

\begin{lemma}\label{}
If $d_{1}d_{2}$ is prime power, there exist $(d_{1}d_{2}-1)$ mutually weak orthogonal Latin squares of order $d_{1}d_{2}$.
\end{lemma}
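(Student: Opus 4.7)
The plan is to give an explicit field-theoretic construction. Since $q := d_1 d_2$ is a prime power, the finite field $\mathbb{F}_q$ exists; I would index the rows and columns of every square by the elements of $\mathbb{F}_q$ and, for every nonzero $a \in \mathbb{F}_q$, define a square $L_a$ whose entries are
\[
L_a(i,j) = i + a \cdot j,
\]
with arithmetic in $\mathbb{F}_q$. Each $L_a$ is a Latin square: for fixed row $i$ the map $j \mapsto i + a j$ is a bijection of $\mathbb{F}_q$ (because $a \neq 0$), so every element appears exactly once in row $i$; and for fixed column $j$ the map $i \mapsto i + a j$ is a bijection (the coefficient of $i$ is $1$), so every element appears exactly once in column $j$. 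There are $q - 1 = d_1 d_2 - 1$ admissible values of $a$, which is the correct count.

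The key step is verifying mutual weak orthogonality in the sense of Definition~5. Fix distinct nonzero parameters $a, b \in \mathbb{F}_q$ and any pair of row indices $i$ (of $L_a$) and $j$ (of $L_b$). The agreement condition $L_a(i,s) = L_b(j,s)$ reduces to $i + a s = j + b s$, i.e.\ $(a - b)\, s = j - i$. Because $a - b$ is a nonzero field element, it is invertible and this linear equation in $s$ has the unique solution $s = (j - i)(a - b)^{-1}$, so exactly one column witnesses the row-agreement.

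There is no serious obstacle here; the one conceptual point worth flagging is that the parameter $a$ must multiply the column index $j$ rather than the row index $i$. With the more familiar MOLS construction $L_a(i,j) = a i + j$, the same computation produces $a i = b j$, a condition independent of $s$ that is satisfied either for all $s$ or for no $s$; so while that construction yields genuine MOLS it does \emph{not} give MWOLS. Putting $a$ in front of $j$ instead linearizes the agreement equation in $s$ with nonzero coefficient $a - b$ and forces a unique solution. As a quick sanity check, specializing to $q = 3$ with $a = 1, 2$ recovers (up to row order) the two Latin squares displayed in Fig.~1.
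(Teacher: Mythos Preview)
Your proposal is correct and is essentially the same construction and argument as the paper's: the paper also sets $q=d_1d_2$, takes the Latin squares $L^s(i,j)=i+\alpha^{s-1}j$ over $\mathbb{F}_q$ (with $\alpha$ a primitive element), and solves $i_0+\alpha^{s_0}j=i_1+\alpha^{s_1}j$ for the unique column $j=(\alpha^{s_1}-\alpha^{s_0})^{-1}(i_0-i_1)$. Your write-up is a bit more thorough in that you explicitly verify the Latin-square property and explain why the parameter must multiply the column index for weak orthogonality to hold.
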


\begin{proof} Set $q=d_{1}d_{2}$. Let $F_{q}=\{0,1,\alpha,\ldots,\alpha^{q-2}\}$ be a finite field of order $q$, and $F^{\ast}_{q}=\{1,\alpha,\ldots,\alpha^{q-2}\}$ be the multiplication group of $F_{q}$. There exist $q-1$ $MOLS(q)$ as follows:
$$\begin{array}{c}
L^{1}= (i+j),~ L^{2}=(i+\alpha j),\ldots,~L^{q-1}=(i+\alpha^{q-2}j).
\end{array}$$
For any $L^{s_{0}}=(i+\alpha^{s_{0}}j)$ and $L^{s_{1}}=(i+\alpha^{s_{1}}j)$, if
$i_{0}+\alpha^{s_{0}}j=i_{1}+\alpha^{s_{1}}j$, we have $j=(\alpha^{s_1}-\alpha^{s_0})^{-1}(i_{0}-i_{1})$.
The same element of the $i_{0}$-th row and the $i_{1}$-th row only occurs in the $j$-th column.
Thus the Latin squares $L^{s_{0}}$ and $L^{s_{1}}$ are mutually weak orthogonal. Hence the $q-1$ Latin squares of order $q$ are also mutually weak orthogonal.
\end{proof}

From Theorem 3 and above Lemma we have
\begin{corollary}\label{}
From $(d_{1}d_{2}-1)$ \text{MWOLS}$(d_{1}d_{2})$, we can construct $(d_{1}d_{2}-1)$ MUAMEBs in $\mathbb{C}^{d_{1}} \otimes \mathbb{C}^{d_{2}} \otimes \mathbb{C}^{d_{1}d_{2}}.$
\end{corollary}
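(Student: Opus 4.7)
The plan is to obtain the corollary as an immediate instantiation of Theorem~3, with the free parameter $t$ set to $d_1d_2-1$. Theorem~3 already carries out the entire MUAMEB construction --- it defines the bijection $f:Z_{d_1}\times Z_{d_2}\to Z_{d_1d_2}$ by $(x,y)\mapsto d_2 x+y$, writes down the explicit states $|\phi^{s}_{n,m,l}\rangle$, and verifies absolute maximal entanglement, within-basis orthogonality, and cross-basis unbiasedness --- so nothing new needs to be proved about the quantum-information content. The corollary simply asks that we feed in a maximum-size family of MWOLS and read off the matching number of MUAMEBs.

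The only substantive content left is to confirm that an input family of the required size is available, which is exactly the role of the preceding Lemma: for $q=d_1d_2$ a prime power, the explicit algebraic list $L^{s}=(i+\alpha^{s}j)$, $s=1,\ldots,q-1$, produces $q-1$ mutually weak orthogonal Latin squares of order $q$. Thus the proof I would write is a two-step chain: first invoke the Lemma to secure a set $\{L^{1},\ldots,L^{d_1d_2-1}\}$ of MWOLS$(d_1d_2)$; then invoke Theorem~3, which, for each $s$ in that range, delivers an AMEB $\{|\phi^{s}_{n,m,l}\rangle\}_{n,m,l}$, and guarantees that distinct indices $s\neq s'$ give bases that are mutually unbiased by the computation of part~(iii) in its proof.

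I do not anticipate any genuine obstacle: the statement is a clean ``existence of inputs plus theorem gives outputs'' assembly, and the only care needed is parameter matching. Specifically, the index $s$ ranging over the Lemma's MWOLS list matches the index $s$ ranging over the MUAMEBs of Theorem~3, so the count $d_1d_2-1$ transfers faithfully. Accordingly, the write-up should be no more than a couple of lines, citing the Lemma for existence of the MWOLS family and Theorem~3 for the conversion of that family into $(d_1d_2-1)$ MUAMEBs in $\mathbb{C}^{d_1}\otimes\mathbb{C}^{d_2}\otimes\mathbb{C}^{d_1d_2}$.
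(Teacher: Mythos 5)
Your proposal is correct and matches the paper exactly: the paper derives this corollary in one line by combining Lemma~1 (which supplies the $(d_1d_2-1)$ MWOLS of order $d_1d_2$ when $d_1d_2$ is a prime power) with Theorem~3 applied with $t=d_1d_2-1$. No further comment is needed.
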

\begin{example}\label{}
Based on the three MWOLS of order $4$ given in Fig. $2$,
$$\centering {\begin{tabular}{c c c c}
{\small \begin{tabular}{|c|c|c|c|}
\hline
0 & 1 & 2 & 3\\
\hline
3 & 2 & 1 & 0\\
\hline
1 & 0 & 3 & 2\\
\hline
2 & 3 & 0 & 1\\
\hline
\end{tabular}} & {\small \begin{tabular}{|c|c|c|c|}
\hline
0 & 3 & 1 & 2\\
\hline
3 & 0 & 2 & 1\\
\hline
1 & 2 & 0 & 3\\
\hline
2 & 1 & 3 & 0\\
\hline
\end{tabular}} & {\small \begin{tabular}{|c|c|c|c|}
\hline
0 & 2 & 3 & 1\\
\hline
3 & 1 & 0 & 2\\
\hline
1 & 3 & 2 & 0\\
\hline
2 & 0 & 1 & 3\\
\hline
\end{tabular}}
\end{tabular}}$$
$$\centering \text{Fig}. 2~Three~\text{MWOLS}~of~order~4$$
we obtain three MUAMEBs in $\mathbb{C}^{2} \otimes \mathbb{C}^{2} \otimes \mathbb{C}^{4}$,
\begin{equation}
\nonumber
\centering
\left\{
\begin{array}{clll}
|\phi^{1}_{n,m,0}\rangle&=\frac{1}{2}[|0\rangle(|00\rangle+\omega_{2}^{m}|11\rangle)+\omega_{2}^{n}|1\rangle(|02\rangle+\omega_{2}^{m}|13\rangle)]\\
|\phi^{1}_{n,m,1}\rangle&=\frac{1}{2}[|0\rangle(|03\rangle+\omega_{2}^{m}|12\rangle)+\omega_{2}^{n}|1\rangle(|01\rangle+\omega_{2}^{m}|10\rangle)]\\
|\phi^{1}_{n,m,2}\rangle&=\frac{1}{2}[|0\rangle(|01\rangle+\omega_{2}^{m}|10\rangle)+\omega_{2}^{n}|1\rangle(|03\rangle+\omega_{2}^{m}|12\rangle)]\\
|\phi^{1}_{n,m,3}\rangle&=\frac{1}{2}[|0\rangle(|02\rangle+\omega_{2}^{m}|13\rangle)+\omega_{2}^{n}|1\rangle(|00\rangle+\omega_{2}^{m}|11\rangle)]
\end{array}
\right.
\end{equation}
\begin{equation}
\nonumber
\centering
\left\{
\begin{array}{clll}
|\phi^{2}_{n,m,0}\rangle=\frac{1}{2}[|0\rangle(|00\rangle+\omega_{2}^{m}|13\rangle)+\omega_{2}^{n}|1\rangle(|01\rangle+\omega_{2}^{m}|12\rangle)]\\
|\phi^{2}_{n,m,1}\rangle=\frac{1}{2}[|0\rangle(|03\rangle+\omega_{2}^{m}|10\rangle)+\omega_{2}^{n}|1\rangle(|02\rangle+\omega_{2}^{m}|11\rangle)]\\
|\phi^{2}_{n,m,2}\rangle=\frac{1}{2}[|0\rangle(|01\rangle+\omega_{2}^{m}|12\rangle)+\omega_{2}^{n}|1\rangle(|00\rangle+\omega_{2}^{m}|13\rangle)]\\
|\phi^{2}_{n,m,3}\rangle=\frac{1}{2}[|0\rangle(|02\rangle+\omega_{2}^{m}|11\rangle)+\omega_{2}^{n}|1\rangle(|03\rangle+\omega_{2}^{m}|10\rangle)]\\
\end{array}
\right.
\end{equation}
\begin{equation}
\nonumber
\centering
\left\{
\begin{array}{clll}
|\phi^{3}_{n,m,0}\rangle=\frac{1}{2}[|0\rangle(|00\rangle+\omega_{2}^{m}|12\rangle)+\omega_{2}^{n}|1\rangle(|03\rangle+\omega_{2}^{m}|11\rangle)]\\
|\phi^{3}_{n,m,1}\rangle=\frac{1}{2}[|0\rangle(|03\rangle+\omega_{2}^{m}|11\rangle)+\omega_{2}^{n}|1\rangle(|00\rangle+\omega_{2}^{m}|12\rangle)]\\
|\phi^{3}_{n,m,2}\rangle=\frac{1}{2}[|0\rangle(|01\rangle+\omega_{2}^{m}|13\rangle)+\omega_{2}^{n}|1\rangle(|02\rangle+\omega_{2}^{m}|10\rangle)]\\
|\phi^{3}_{n,m,3}\rangle=\frac{1}{2}[|0\rangle(|02\rangle+\omega_{2}^{m}|10\rangle)+\omega_{2}^{n}|1\rangle(|01\rangle+\omega_{2}^{m}|13\rangle)]\\
\end{array}
\right.
\end{equation}
where $0\leq n,m\leq1$.
\end{example}

\begin{theorem}\label{}
For any given pair of mutually weak orthogonal Latin squares of order $d_{1}$ and a pair of mutually weak orthogonal Latin squares of order $d_{2}$, there exists a pair of mutually weak orthogonal Latin squares of order $d_{1}d_{2}$. Furthermore, we can derive two MUAMEBs in $\mathbb{C}^{d_{1}} \otimes \mathbb{C}^{d_{2}} \otimes \mathbb{C}^{d_{1}d_{2}}$.
\end{theorem}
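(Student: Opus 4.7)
The plan is to construct a pair of mutually weak orthogonal Latin squares of order $d_1 d_2$ by a Kronecker-style product of the two given pairs, and then to apply Theorem~3 with $t=2$ to obtain the two MUAMEBs. Let $(A^1,A^2)$ denote the given MWOLS of order $d_1$ (entries in $Z_{d_1}$) and $(B^1,B^2)$ the given MWOLS of order $d_2$ (entries in $Z_{d_2}$). Reusing the bijection $f:Z_{d_1}\times Z_{d_2}\to Z_{d_1 d_2}$, $(x,y)\mapsto d_2 x+y$ from the proof of Theorem~3 both to label row/column positions (write $I=d_2 i_1+i_2$ and $J=d_2 j_1+j_2$) and to package entries, I would define Latin squares $L^1,L^2$ of order $d_1 d_2$ by
$$L^k_{IJ}=d_2\,A^k_{i_1 j_1}+B^k_{i_2 j_2},\qquad k=1,2.$$

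The first sub-step is to verify that each $L^k$ is a Latin square of order $d_1 d_2$. Row $i_1$ of $A^k$ is a permutation of $Z_{d_1}$ and row $i_2$ of $B^k$ is a permutation of $Z_{d_2}$; as $(j_1,j_2)$ sweeps $Z_{d_1}\times Z_{d_2}$ the pair $(A^k_{i_1 j_1},B^k_{i_2 j_2})$ therefore covers that product set once, and $f$ carries it bijectively onto $Z_{d_1 d_2}$. The column case is symmetric.

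The main step is verifying mutual weak orthogonality. Fix rows $I=d_2 i_1+i_2$ of $L^1$ and $I'=d_2 i_1'+i_2'$ of $L^2$. A column $S=d_2 s_1+s_2$ witnesses an intersection iff
$$d_2\,A^1_{i_1 s_1}+B^1_{i_2 s_2}=d_2\,A^2_{i_1' s_1}+B^2_{i_2' s_2}.$$
Since every entry of $A^k$ lies in $\{0,\ldots,d_1-1\}$ and every entry of $B^k$ lies in $\{0,\ldots,d_2-1\}$, both sides are honest base-$d_2$ digit expansions in $Z_{d_1 d_2}$, so this equation is equivalent to the componentwise pair $A^1_{i_1 s_1}=A^2_{i_1' s_1}$ and $B^1_{i_2 s_2}=B^2_{i_2' s_2}$. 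The MWOLS property of $(A^1,A^2)$ supplies a unique $s_1$ satisfying the first, and the MWOLS property of $(B^1,B^2)$ supplies a unique $s_2$ satisfying the second; together they furnish exactly one such $S$. The furthermore clause then follows at once from Theorem~3 applied with $t=2$.

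The main obstacle is not depth but careful bookkeeping: one must keep the two roles of $f$ (indexing positions versus encoding entries) straight, and must justify that the equality in $Z_{d_1 d_2}$ forces componentwise agreement rather than allowing modular wrap-around. Once the base-$d_2$ uniqueness observation is in place, the argument decouples cleanly into two independent applications of the one-dimensional MWOLS hypothesis, with no need to revisit the entanglement, orthogonality, or unbiasedness computations already performed in Theorem~3.
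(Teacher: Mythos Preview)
Your proposal is correct and follows essentially the same Kronecker-product construction as the paper: form $L^k$ from the pair $(A^k,B^k)$ via $L^k_{IJ}=d_2 A^k_{i_1 j_1}+B^k_{i_2 j_2}$, verify that the weak-orthogonality condition decouples into the two component conditions via the uniqueness of base-$d_2$ digit expansions, and then invoke Theorem~3. Your write-up is in fact slightly more careful than the paper's, since you explicitly check that each $L^k$ is a Latin square and articulate the base-$d_2$ uniqueness step that prevents wrap-around.
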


\begin{proof} Suppose that $L^{1}=(l^{1}_{i,j})_{d_{1}\times d_{1}}$ and $L^{2}=(l^{2}_{i,j})_{d_{1}\times d_{1}}$ be a pair of MWOLS$(d_{1})$;
$K^{1}=(k^{1}_{i,j})_{d_{2}\times d_{2}}$ and $K^{2}=(k^{2}_{i,j})_{d_{2}\times d_{2}}$ be a pair of MWOLS$(d_{2})$,
\begin{equation}
\nonumber
      L^{1}=
         \begin{array}{|c|c|c|}
         \hline
          l^{1}_{0,0} & \cdots & l^{1}_{0,d_{1}-1}\\
          \hline
          \vdots & \vdots & \vdots\\
          \hline
          l^{1}_{d_{1}-1,0} & \cdots & l^{1}_{d_{1}-1,d_{1}-1}\\
          \hline
         \end{array},~~~
        L^{2}=
         \begin{array}{|c|c|c|}
         \hline
         l^{2}_{0,0} & \cdots & l^{2}_{0,d_{1}-1}\\
         \hline
         \vdots & \vdots & \vdots\\
         \hline
         l^{2}_{d_{1}-1,0} & \cdots & l^{2}_{d_{1}-1,d_{1}-1}\\
         \hline
         \end{array},
\end{equation}
\begin{equation}
\nonumber
      K^{1}=
         \begin{array}{|c|c|c|}
         \hline
         k^{1}_{0,0} & \cdots & k^{1}_{0,d_{2}-1}\\
         \hline
         \vdots & \vdots & \vdots\\
         \hline
         k^{1}_{d_{2}-1,0} & \cdots & k^{1}_{d_{2}-1,d_{2}-1}\\
         \hline
         \end{array},~~~
      K^{2}=
         \begin{array}{|c|c|c|}
         \hline
         k^{2}_{0,0} & \cdots & k^{2}_{0,d_{2}-1}\\
         \hline
         \vdots & \vdots & \vdots\\
         \hline
         k^{2}_{d_{2}-1,0} & \cdots & k^{2}_{d_{2}-1,d_{2}-1}\\
         \hline
         \end{array}.
\end{equation}
We can construct two squares as follows:
$M^{1}=L^{1}\times K^{1}=(l^{1}_{i,j},k^{1}_{s,t})_{(i,s),(j,t)}$, $M^{2}=L^{2}\times K^{2}=(l^{2}_{i,j},k^{2}_{s,t})_{(i,s),(j,t)}$:
\begin{equation}
\nonumber
      M^{1}=
         \begin{array}{|c|c|c|}
         \hline
         l^{1}_{0,0}\times K^{1}& \cdots & l^{1}_{0,d_{1}-1}\times K^{1}\\
         \hline
         \vdots & \vdots & \vdots\\
         \hline
         l^{1}_{d_{1}-1,0}\times K^{1} & \cdots & l^{1}_{d_{1}-1,d_{1}-1}\times K^{1}\\
         \hline
         \end{array},\,
      M^{2}=
         \begin{array}{|c|c|c|}
         \hline
         l^{2}_{0,0}\times K^{2} & \cdots & l^{2}_{0,d_{1}-1}\times K^{2}\\
         \hline
         \vdots & \vdots & \vdots\\
         \hline
         l^{2}_{d_{1}-1,0}\times K^{2} & \cdots & l^{2}_{d_{1}-1,d_{1}-1}\times K^{2}\\
         \hline
         \end{array},
\end{equation}
where $l^{m}_{i,j}\times K^{m}=l^{m}_{i,j}\times k^{m}_{s,t}$, $i,j\in [d_{1}]$, $s,t\in [d_{2}]$.

It is easy to check that $M^{1}$ and $M^{2}$ is a pair of MWOLS$(d_{1}d_{2})$. In fact, for any two rows,
$${\begin{array}{c}
r^{1}_{(i,s)}=\{(l^{1}_{i,0},k^{1}_{s,0}),(l^{1}_{i,0},k^{1}_{s,1}),\ldots,(l^{1}_{i,0},k^{1}_{s,d_{2}-1}),\ldots,(l^{1}_{i,d_{1}-1},k^{1}_{s,d_{2}-1})\}\in M^{1},\\[1mm]
r^{2}_{(i^{\prime},s^{\prime})}=\{(l^{2}_{i^{\prime},0},k^{2}_{s^{\prime},0}),(l^{2}_{i^{\prime},0},k^{2}_{s^{\prime},1}),
\ldots,(l^{2}_{i^{\prime},0},k^{2}_{s^{\prime},d_{2}-1}),\ldots,(l^{2}_{i^{\prime},d_{1}-1},k^{2}_{s^{\prime},d_{2}-1})\}\in M^{2},
\end{array}}$$
since the Latin squares $L^{1}$ and $L^{2}$ are mutually weak orthogonal,
$(l^{1}_{i,0},l^{1}_{i,1},\ldots,l^{1}_{i,d_{1}-1})$ contains exactly one element in common with $(l^{2}_{i^{\prime},0},l^{2}_{i^{\prime},1},\ldots,l^{2}_{i^{\prime},d_{1}-1})$.
Similarly, $(k^{1}_{s,0},k^{1}_{s,1},\ldots,k^{1}_{s,d_{2}-1})$ also contains exactly one element in common with $(k^{2}_{s^{\prime},0},k^{2}_{s^{\prime},1},\ldots,k^{2}_{s^{\prime},d_{2}-1})$.
Without loss of generality, denote $l^{1}_{i,0}=l^{2}_{i^{\prime},0}$ and $k^{1}_{s,0}=k^{2}_{s^{\prime},0}$.
Then $(l^{1}_{i,0},k^{1}_{s,0})=(l^{2}_{i^{\prime},0},k^{2}_{s^{\prime},0})$ and
$r^{1}_{(i,s)}$ contains exactly one element in common with $r^{2}_{(i^{\prime},s^{\prime})}$.

Assume that
$\widetilde{{M^{1}}}=\{d_{2}l^{1}_{i,j}+k^{1}_{s,t}:i,j\in[d_{1}],s,t\in[d_{2}]\}$ and
$\widetilde{{M^{2}}}=\{d_{2}l^{2}_{i,j}+k^{2}_{s,t}:i,j\in [d_{1}],s,t\in [d_{2}]\}$.
we obtain two mutually weak orthogonal Latin squares $\widetilde{{M^{1}}}$ and $\widetilde{{M^{2}}}$. And by Theorem 3, we have that these two absolutely maximally entangled bases in $\mathbb{C}^{d_{1}} \otimes \mathbb{C}^{d_{2}} \otimes \mathbb{C}^{d_{1}d_{2}}$ are mutually unbiased.
\end{proof}

\begin{corollary}\label{}
If $d_{1}d_{2}=2^{s}p_{1}^{r_{1}}p_{2}^{r_{2}}\ldots p_{t}^{r_{t}} $ is a standard factorization of $d_{1}d_{2}$ into distinct prime
powers, where $p_{i}\neq p_{j}\neq2$, $p_{i}^{r_{i}}\geq3$, $s\geq2$, there exists a set of $min\{2^{s}-1,p_{1}^{r_{1}}-1,\ldots,p_{t}^{r_{t}}-1\}$~\text{MWOLS}$(d_{1}d_{2})$, that is, $min\{2^{s}-1,p_{1}^{r_{1}}-1,\ldots,p_{t}^{r_{t}}-1\}$ MUAMEBs in $\mathbb{C}^{d_{1}} \otimes \mathbb{C}^{d_{2}} \otimes \mathbb{C}^{d_{1}d_{2}}$ can be obtained.
\end{corollary}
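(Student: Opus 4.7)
The plan is to reduce the corollary to two previously established facts: Lemma 1, which gives a complete set of $q-1$ MWOLS for every prime-power order $q$, and Theorem 4, which manufactures a pair of MWOLS of order $d_1 d_2$ out of a pair of MWOLS of order $d_1$ and a pair of MWOLS of order $d_2$. Write $d_1 d_2 = q_0 q_1 \cdots q_t$ with $q_0 = 2^s$ and $q_i = p_i^{r_i}$ for $1 \le i \le t$. For each factor $q_i$, Lemma 1 supplies $q_i - 1$ MWOLS of order $q_i$. Set $N = \min_{0 \le i \le t}(q_i - 1)$ and truncate each complete set down to its first $N$ MWOLS.

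The key step is to observe that the product construction in the proof of Theorem 4 extends verbatim from pairs to $N$-tuples: given MWOLS $L^{(1)}, \ldots, L^{(N)}$ of order $a$ and MWOLS $K^{(1)}, \ldots, K^{(N)}$ of order $b$, the Kronecker-style products $M^{(\nu)}_{(i,s),(j,t)} = b \cdot l^{(\nu)}_{i,j} + k^{(\nu)}_{s,t}$ form $N$ MWOLS of order $ab$. Each $M^{(\nu)}$ is itself a Latin square by the standard direct product argument; and for $\nu \neq \mu$, a common entry between row $(i,s)$ of $M^{(\nu)}$ and row $(i',s')$ of $M^{(\mu)}$ at column $(j,t)$ forces simultaneously $l^{(\nu)}_{i,j} = l^{(\mu)}_{i',j}$ and $k^{(\nu)}_{s,t} = k^{(\mu)}_{s',t}$. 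The MWOLS property for $(L^{(\nu)}, L^{(\mu)})$ pins down a unique column $j^*$ and that for $(K^{(\nu)}, K^{(\mu)})$ pins down a unique column $t^*$, so the unique common cell is $(j^*, t^*)$. This is exactly what mutual weak orthogonality of $M^{(\nu)}$ and $M^{(\mu)}$ requires.

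With this collection-level version of Theorem 4 in hand, iterate over the prime-power factors: combine $N$ MWOLS of order $q_0$ with $N$ MWOLS of order $q_1$ to obtain $N$ MWOLS of order $q_0 q_1$; then combine those with $N$ MWOLS of order $q_2$ to get $N$ MWOLS of order $q_0 q_1 q_2$; continue until the accumulated product equals $d_1 d_2$. Finally, Theorem 3 converts each of the $N$ resulting MWOLS of order $d_1 d_2$ into an absolutely maximally entangled basis in $\mathbb{C}^{d_1} \otimes \mathbb{C}^{d_2} \otimes \mathbb{C}^{d_1 d_2}$, and mutual unbiasedness between any two such bases follows directly from the mutual weak orthogonality of the underlying Latin squares, as shown in part (iii) of the proof of Theorem 3.

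The only substantive new ingredient is the $N$-tuple strengthening of Theorem 4; once it is recorded, the rest is an induction on the number of prime-power factors together with a direct appeal to Theorem 3. The hypotheses $s \geq 2$ and $p_i^{r_i} \geq 3$ guarantee that every $q_i - 1 \geq 1$, hence $N \geq 1$, so the statement always produces a nontrivial number of MUAMEBs. I expect no serious obstacle beyond bookkeeping the indices in the iterated product, since the mutual weak orthogonality property behaves well under the Kronecker construction.
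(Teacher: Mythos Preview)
Your proposal is correct and matches the approach implied by the paper, which states the result as an unproved corollary immediately after Theorem 4 and Lemma 1. You have simply made explicit the two steps the paper leaves to the reader: the extension of the direct-product construction in Theorem 4 from pairs to $N$-tuples of MWOLS, and the induction over the prime-power factors, followed by the invocation of Theorem 3.
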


Notice that the above construction mainly depends on the existence of mutually weak orthogonal Latin squares. However,
two Latin squares are orthogonal but not necessarily weak orthogonal. As orthogonal Latin squares have
a resolution of transversals, next we introduce another method to construct weak orthogonal Latin squares.

\begin{lemma}\label{}
If a Latin square $L^{1}$ of order $d$ has a resolution of transversals, then a Latin square $L^{2}$ of
order $d$ can be constructed and the two Latin squares are mutually weak orthogonal.
\end{lemma}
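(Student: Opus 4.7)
The plan is to construct $L^{2}$ explicitly from the given resolution. Let $T_{0}, T_{1}, \ldots, T_{d-1}$ be the transversals forming the resolution of $L^{1}$, so that every cell of $L^{1}$ lies in exactly one $T_{k}$. Since each $T_{k}$ has one cell in every row and in every column, I would write $T_{k} = \{(i, \sigma_{k}(i)) : 0 \le i \le d-1\}$ with $\sigma_{k}$ a permutation of $\{0,1,\ldots,d-1\}$; the transversal property also guarantees that $\{l^{1}_{i,\sigma_{k}(i)}\}_{i}$ is a permutation of the symbol set.

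I would then define
\[
L^{2}_{k,j} := l^{1}_{\sigma_{k}^{-1}(j),\,j}, \qquad 0 \le k,j \le d-1,
\]
so that row $k$ of $L^{2}$ records the entries of $L^{1}$ on the transversal $T_{k}$, placed at their original column positions.

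Three points then remain to check. First, row $k$ of $L^{2}$ is a permutation, because as $j$ varies $L^{2}_{k,j}$ simply lists the entries of $T_{k}$ in some order. Second, column $j$ of $L^{2}$ is a permutation: as $k$ varies, the row indices $\sigma_{k}^{-1}(j)$ exhaust all $d$ rows (each cell of column $j$ belongs to exactly one of the $d$ disjoint transversals), so $L^{2}_{k,j}$ reads off column $j$ of $L^{1}$, which is itself a permutation. Third, for weak orthogonality, fix any row $i$ of $L^{1}$ and any row $k$ of $L^{2}$: the equation $l^{1}_{i,s} = L^{2}_{k,s} = l^{1}_{\sigma_{k}^{-1}(s),\,s}$ forces $i = \sigma_{k}^{-1}(s)$ because column $s$ of $L^{1}$ contains every symbol exactly once, and this pins down the unique matching column $s = \sigma_{k}(i)$. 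So $L^{1}$ and $L^{2}$ are mutually weak orthogonal in the sense of Definition~5.

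The main obstacle is choosing the right construction. The natural candidate $\widetilde{L}_{i,j} := k$ whenever $(i,j) \in T_{k}$ gives the classical orthogonal mate of $L^{1}$, but this is typically \emph{not} weak orthogonal: for $L^{1}_{i,j} = (j-i) \bmod 3$ with its three natural transversals one obtains $\widetilde{L}_{i,j} = (i+j) \bmod 3$, and then for the row pair $(i,-i)$ the two rows agree in every column. The construction above avoids this because $T_{k}$ is used only to decide, column by column, which row of $L^{1}$ donates the entry in row $k$ of $L^{2}$; when this entry is then matched against row $i$ of $L^{1}$ in a given column, the one-to-one-ness of that column of $L^{1}$ picks out a single matching column, as required.
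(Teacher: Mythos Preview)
Your construction is correct and coincides with the paper's: placing the entries of the $k$-th transversal of $L^{1}$ into row $k$ of $L^{2}$ at their original column positions is exactly what the paper does (their $i_{s}$ is your $\sigma_{k}^{-1}(s)$), and your weak-orthogonality argument via injectivity of each column of $L^{1}$ matches theirs. Your extra paragraph explaining why the classical orthogonal mate need not be weak orthogonal is a useful addition not present in the paper.
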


\begin{proof}
Set $[d]$ serve as the index set for the rows and columns in $L^{1}$. Without loss of generality, denote $L^{1}=(l_{i,j})$, $i,j\in[d]$. The 0-th column in $L^{1}$ is a natural arrangement of $0,1,\ldots,d-1$, namely, $l_{i,0}=i$, $i\in[d]$.

If $l^{k}$ be the transversal, then we take $k$ as the first entry in the 0-th column, $l^{k}=(l_{i_{0},0},l_{i_{1},1},\ldots,l_{i_{d-1},d-1})$, where $l_{i_{0},0}=l_{k,0}$. We take the transversal $l^{k}$ as the $k$-th row of $L^{2}$.
It is clear that $L^{2}$ is also a Latin square of order $d$. In fact, since the $i$-th row of $L^{2}$ is the transversal in the $L^{1}$, the entries in this row run through the index set. Since the transversal is picked from different rows and columns, each entry occurs exactly once in each row and exactly once in each column. Then for any two rows
$r^{1}=(l_{i,0},l_{i,1},\ldots,l_{i,d-1})\in L^{1}$ and $r^{2}=(l_{i_{0},0},l_{i_{1},1},$ $\ldots,l_{i_{d-1},d-1})\in L^{2}$,
only when $i=i_{s}$, $s\in[d]$, $r^{1}$ contains exactly one element in common with $r^{2}$. Thus the Latin squares $L^{1}$ and $L^{2}$ are mutually weak orthogonal.
\end{proof}

{\bf Remark.} When $d\neq 2,6$, there is at least a pair of mutually weak orthogonal Latin squares.

\begin{theorem}\label{}
There exists a product basis in $\mathbb{C}^{d_{1}} \otimes \mathbb{C}^{d_{2}} \otimes \mathbb{C}^{d_{1}d_{2}}$, which is mutually unbiased with the $t$ absolutely maximally entangled bases given in Theorem 3.
\end{theorem}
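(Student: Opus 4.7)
The plan is to exhibit the explicit product basis
$$\mathcal{P} = \Bigl\{\,|i\rangle \otimes |j\rangle \otimes |F_k\rangle :\ 0\le i\le d_1-1,\ 0\le j\le d_2-1,\ 0\le k\le d_1d_2-1\,\Bigr\},$$
where $|F_k\rangle = \frac{1}{\sqrt{d_1d_2}}\sum_{p=0}^{d_1d_2-1}\omega_{d_1d_2}^{kp}|p\rangle$ is the discrete Fourier basis of $\mathbb{C}^{d_1d_2}$, and to verify that it is mutually unbiased with each AMEB $\mathcal{B}_s=\{|\phi^{s}_{n,m,l}\rangle\}$ supplied by Theorem 3.

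First I would observe that $\mathcal{P}$ is manifestly a product basis: every vector factorizes across the three tensor factors by construction, and orthonormality follows from the orthonormality of the computational basis on the first two subsystems together with the unitarity of the discrete Fourier transform on the third. The cardinality $d_1\cdot d_2\cdot d_1d_2 = (d_1d_2)^2$ matches $\dim\bigl(\mathbb{C}^{d_1}\otimes\mathbb{C}^{d_2}\otimes\mathbb{C}^{d_1d_2}\bigr)$, so $\mathcal{P}$ is indeed a complete product orthonormal basis.

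The core step is a single overlap calculation. For any fixed $s,n,m,l$ and any basis vector $|i\rangle|j\rangle|F_k\rangle$, the computational-basis projections $\langle i|\cdot|$ and $\langle j|\cdot|$ collapse the sums over the first two factors, giving
\begin{equation*}
\bigl(\langle i|\langle j|\langle F_k|\bigr)|\phi^{s}_{n,m,l}\rangle
=\frac{1}{\sqrt{d_1d_2}}\,\omega_{d_1}^{ni}\,\omega_{d_2}^{mj}\,\langle F_k|L^{s}_{lf(i,j)}\rangle.
\end{equation*}
Since $\langle F_k|p\rangle = \frac{1}{\sqrt{d_1d_2}}\,\omega_{d_1d_2}^{-kp}$ is a pure phase of modulus $1/\sqrt{d_1d_2}$ for every index $p\in\{0,\dots,d_1d_2-1\}$, taking absolute values gives an overlap of exactly $1/(d_1d_2) = 1/\sqrt{\dim}$, independently of $s,n,m,l,i,j,k$.

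I expect essentially no obstacle here: the Latin-square data enter each state $|\phi^{s}_{n,m,l}\rangle$ only through the single computational-basis label $L^{s}_{lf(i,j)}$ on the third subsystem, so replacing that subsystem's basis with its Fourier transform converts this label into a phase and immediately yields the uniform bound $1/(d_1d_2)$ simultaneously for all $s=1,\dots,t$. Hence $\mathcal{P}$ is a single product basis mutually unbiased to each of the $t$ AMEBs from Theorem 3.
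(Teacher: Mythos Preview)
Your argument is correct, but it is a different construction from the paper's. The paper keeps the same template as its AMEBs: it puts the Fourier basis on the first two tensor factors and the \emph{computational} basis on the third, writing
\[
|\psi_{n,m,l}\rangle=\frac{1}{\sqrt{d_{1}d_{2}}}\Bigl(\sum_{i}\omega_{d_{1}}^{ni}|i\rangle\Bigr)\otimes\Bigl(\sum_{j}\omega_{d_{2}}^{mj}|j\rangle\Bigr)\otimes|l\rangle,
\]
which it obtains by plugging the constant-row matrix $M$ (row $l$ equal to $l$ everywhere) into the Theorem~3 formula. Unbiasedness then comes from the Latin-square property of each $L^{s}$: in the inner product the third-factor overlap $\langle L^{s}_{l'f(i,j)}\mid l\rangle$ is nonzero for exactly one pair $(i,j)$, because each symbol occurs once per row. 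You instead swap the roles, using the computational basis on the first two factors and the Fourier basis on the third; the collapse onto fixed $i,j$ leaves a single third-factor overlap $\langle F_{k}\mid L^{s}_{lf(i,j)}\rangle$ whose modulus is automatically $1/\sqrt{d_{1}d_{2}}$. Your route is shorter and does not invoke any combinatorial fact about $L^{s}$ at all---it would work for any basis of the Theorem~3 shape regardless of whether the third-factor labels came from a Latin square---whereas the paper's route has the virtue of keeping the product basis inside the same Latin-square framework (the constant array $M$ is ``weakly orthogonal'' to every Latin square), so that all $t+1$ bases arise from a single recipe.
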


\begin{proof}
It is easy to prove that the following matrix $M$ in Fig. 3 is orthogonal to Latin square of order $d$. The matrix $M$ and Latin square $L$ are also mutually weak orthogonal.
$$\centering {\begin{array}{c c c c}
{\small \begin{array}{|c|c|c|c|}
\hline
0 & 0 & \cdots & 0\\
\hline
1 & 1 & \cdots & 1\\
\hline
\vdots & \vdots & \vdots & \vdots\\
\hline
d-1 & d-1 & d-1 & d-1\\
\hline
\end{array}}
\end{array}}$$
$$\centering \text{Fig}. 3~Matrix~M~of~order~d$$
Based on the matrix $M$ of order $d_{1}d_{2}$, we can construct a product basis in $\mathbb{C}^{d_{1}} \otimes \mathbb{C}^{d_{2}} \otimes \mathbb{C}^{d_{1}d_{2}}$ as follows,
$$
|\psi_{n,m,l}\rangle=\frac{1}{\sqrt{d_{1}d_{2}}}
\sum\limits_{i=0}^{d_{1}-1}\omega_{d_{1}}^{ni}|i\rangle\otimes
\sum\limits_{j=0}^{d_{2}-1}\omega_{d_{2}}^{mj}|j\rangle\otimes|{L_{lf(i,j)}}\rangle,
$$
where $\omega_{d}=e^{\frac{2\pi\sqrt{-1}}{d}}$, $0\leq n\leq d_{1}-1$, $0\leq m\leq d_{2}-1$ and $0\leq l\leq d_{1}d_{2}-1$.
Since for any fixed $l$, $L_{lf(i,j)}$ is the same as $M$ for all $0\leq i\leq d_1$ and $0\leq j\leq d_2$. Thus $\{|\psi_{n,m,l}\rangle,0\leq n\leq d_{1}-1,0\leq m\leq d_{2}-1, 0\leq l\leq d_{1}d_{2}-1\}$ is a product basis.

The product basis is mutually unbiased to the $t$ absolutely maximally entangled bases given in Theorem 3.
$$\begin{array}{clll}
|\langle\phi^{s}_{n^{\prime},m^{\prime},l^{\prime}}|\psi_{n,m,l}\rangle|
&=\frac{1}{d_{1}d_{2}}|\sum\limits_{i,i^{\prime}=0}^{d_{1}-1}\omega_{d_{1}}^{ni-n^{\prime}i^{\prime}}\langle i^{\prime}|i\rangle\sum\limits_{j,j^{\prime}=0}^{d_{2}-1}\omega_{d_{2}}^{mj-m^{\prime}j^{\prime}}\langle j^{\prime}|j\rangle\langle L^{s}_{l^{\prime}f(i^{\prime},j^{\prime})}|{L_{lf(i,j)}}\rangle|\\[3mm]
&=\frac{1}{d_{1}d_{2}}|\sum\limits_{i=0}^{d_{1}-1}\omega_{d_{1}}^{(n-n^{\prime})i}\sum\limits_{j=0}^{d_{2}-1}\omega_{d_{2}}^{(m-m^{\prime})j}\langle L^{s}_{l^{\prime}f(i,j)}|{L_{lf(i,j)}}\rangle|\\[3mm]
&=\frac{1}{d_{1}d_{2}}|\omega_{d_{1}}^{(n-n^{\prime})i}\omega_{d_{2}}^{(m-m^{\prime})j}|\\[3mm]
&=\frac{1}{d_{1}d_{2}}.
\end{array}$$
\end{proof}
\begin{example}\label{}
Based on the matrix $M$ of order $4$ in Fig. 3. and  Theorem 5, we obtain a product basis in $\mathbb{C}^{2} \otimes \mathbb{C}^{2} \otimes \mathbb{C}^{4}$,
\begin{equation}
\nonumber
\centering
\left\{
\begin{array}{clll}
|\psi_{n,m,0}\rangle&=\frac{1}{2}[(|0\rangle+\omega_{2}^{n}|1\rangle)(|0\rangle+\omega_{2}^{m}|1\rangle)|0\rangle]\\
|\psi_{n,m,1}\rangle&=\frac{1}{2}[(|0\rangle+\omega_{2}^{n}|1\rangle)(|0\rangle+\omega_{2}^{m}|1\rangle)|1\rangle]\\
|\psi_{n,m,2}\rangle&=\frac{1}{2}[(|0\rangle+\omega_{2}^{n}|1\rangle)(|0\rangle+\omega_{2}^{m}|1\rangle)|2\rangle]\\
|\psi_{n,m,3}\rangle&=\frac{1}{2}[(|0\rangle+\omega_{2}^{n}|1\rangle)(|0\rangle+\omega_{2}^{m}|1\rangle)|3\rangle]
\end{array}
\right.
\end{equation}
where $0\leq n,m\leq1$, which is mutually unbiased with the three absolutely maximally entangled bases given in Example~2.
\end{example}

For further illustration of the Theorem 3 and Lemma 2, see Example 4 in Appendix.

\section{Conclusion}
Based on a pair of mutually orthogonal Latin squares (MOLS), we have constructed a pair of mutually unbiased absolutely maximally entangled bases (MUAMEBs) in $\mathbb{C}^{d} \otimes \mathbb{C}^{d} \otimes \mathbb{C}^{d}$$(d\neq2,6)$. Then based on mutually weak orthogonal Latin squares (MWOLS), we have constructed MUAMEBs in tripartite quantum system $\mathbb{C}^{d_{1}} \otimes \mathbb{C}^{d_{2}} \otimes \mathbb{C}^{d_{1}d_{2}}$ $(d_{2}\geq d_{1}\geq2,d_{1}d_{2}\neq 6)$. In particular, when $d_{1}d_{2}$ is a prime power, there exist $(d_{1}d_{2}-1)$ MUAMEBs. When $d_{1}d_{2}=2^{s}p_{1}^{r_{1}}p_{2}^{r_{2}}\ldots p_{t}^{r_{t}},$ where $p_{i}\neq p_{j}\neq2$, $p_{i}^{r_{i}}\geq3$, $s\geq2$, we have obtained $min\{2^{s}-1,p_{1}^{r_{1}}-1,\ldots,p_{t}^{r_{t}}-1\}$ MUAMEBs. Moreover, we have presented a pair of MUAMEBs for other cases of $d_{1}d_{2}$. We have also derived a product basis, which is mutually unbiased to the above MUAMEBs. Besides, we have put forward the  corresponding examples in $\mathbb{C}^{3} \otimes \mathbb{C}^{3} \otimes \mathbb{C}^{3}$, $\mathbb{C}^{2} \otimes \mathbb{C}^{2} \otimes \mathbb{C}^{4}$ and $\mathbb{C}^{2} \otimes \mathbb{C}^{5} \otimes \mathbb{C}^{10}$. Here we list some specific results in the table below.
$$\begin{array}{|c|c|c|} \hline
d_{1}d_{2}& ~~M(d_{1},d_{2},d_{1}d_{2})& ~~N(d_{1},d_{2},d_{1}d_{2})\\ \hline
4 & 3(Corollary 1) & 4\\ \hline
8 & 7(Corollary 1) & 8\\ \hline
9 & 8(Corollary 1) & 9\\ \hline
10 & 2(Lemma 2) & 3\\ \hline
12 & 2(Corollary 2, Lemma 2) & 3\\ \hline
14 & 2(Lemma 2) & 3\\ \hline
15 & 2(Corollary 2) & 3\\ \hline
16 & 15(Corollary 1) & 16\\ \hline
18 & 2(Lemma 2) & 3\\ \hline
20 & 3(Corollary 2) & 4\\ \hline
 \end{array}$$
where $M(d_{1},d_{2},d_{1}d_{2})$ and $N(d_{1},d_{2},d_{1}d_{2})$ are the maximal number of MUAMEBs and MUBs in $\mathbb{C}^{d_{1}} \otimes \mathbb{C}^{d_{2}} \otimes \mathbb{C}^{d_{1}d_{2}}$, respectively.

\section*{Acknowledgements}
This work is supported by Natural Science Foundation of Hebei Province (F2021205001), NSFC (Grant Nos. 11871019, 62272208, 12075159, 12171044), Beijing Natural Science Foundation (Z190005), Academy for Multidisciplinary Studies, Capital Normal University, the Academician Innovation Platform of Hainan Province, and Shenzhen Institute for Quantum Science and Engineering, Southern University of Science and Technology (SIQSE202001).

\appendix
\section{Appendix}
The following example is provided to facilitate further understanding of our direct product construction.

\begin{example}\label{}
We present three mutually unbiased bases in $\mathbb{C}^{2} \otimes \mathbb{C}^{5} \otimes \mathbb{C}^{10}$.
Consider a pair of Latin squares and the matrix $M$ of order 10 given in the figure below, which are mutually weak orthogonal.

${\begin{tabular}{ll}
{\small \begin{tabular}{|c|c|c|c|c|c|c|c|c|c|}
\hline
0 & 4 & 1 & 7 & 2 & 9 & 8 & 3 & 6 & 5\\
\hline
8 & 1 & 5 & 2 & 7 & 3 & 9 & 4 & 0 & 6\\
\hline
9 & 8 & 2 & 6 & 3 & 7 & 4 & 5 & 1 & 0\\
\hline
5 & 9 & 8 & 3 & 0 & 4 & 7 & 6 & 2 & 1\\
\hline
7 & 6 & 9 & 8 & 4 & 1 & 5 & 0 & 3 & 2\\
\hline
6 & 7 & 0 & 9 & 8 & 5 & 2 & 1 & 4 & 3\\
\hline
3 & 0 & 7 & 1 & 9 & 8 & 6 & 2 & 5 & 4\\
\hline
1 & 2 & 3 & 4 & 5 & 6 & 0 & 7 & 8 & 9\\
\hline
2 & 3 & 4 & 5 & 6 & 0 & 1 & 8 & 9 & 7\\
\hline
4 & 5 & 6 & 0 & 1 & 2 & 3 & 9 & 7 & 8\\
\hline
\end{tabular}} & {\small \begin{tabular}{|c|c|c|c|c|c|c|c|c|c|}
\hline
0 & 1 & 2 & 3 & 4 & 5 & 6 & 7 & 9 & 8\\
\hline
1 & 3 & 6 & 8 & 0 & 7 & 9 & 2 & 4 & 5\\
\hline
2 & 5 & 8 & 6 & 7 & 9 & 0 & 1 & 3 & 4\\
\hline
3 & 7 & 9 & 4 & 6 & 2 & 8 & 5 & 0 & 1\\
\hline
4 & 8 & 5 & 7 & 9 & 6 & 1 & 0 & 2 & 3\\
\hline
5 & 6 & 0 & 1 & 2 & 3 & 4 & 8 & 7 & 9\\
\hline
6 & 0 & 1 & 2 & 3 & 4 & 5 & 9 & 8 & 7\\
\hline
7 & 9 & 3 & 5 & 1 & 8 & 2 & 4 & 6 & 0\\
\hline
8 & 4 & 7 & 9 & 5 & 0 & 3 & 6 & 1 & 2\\
\hline
9 & 2 & 4 & 0 & 8 & 1 & 7 & 3 & 5 & 6\\
\hline
\end{tabular}}
\end{tabular}}$

\begin{small}
\setlength{\arraycolsep}{1.2pt}
\hspace{3cm}\begin{tabular}{c c c c c c c c c c}
{\small \begin{tabular}{|c|c|c|c|c|c|c|c|c|c|}
\hline
0 & 0 & 0 & 0 & 0 & 0 & 0 & 0 & 0 & 0\\
\hline
1 & 1 & 1 & 1 & 1 & 1 & 1 & 1 & 1 & 1\\
\hline
2 & 2 & 2 & 2 & 2 & 2 & 2 & 2 & 2 & 2\\
\hline
3 & 3 & 3 & 3 & 3 & 3 & 3 & 3 & 3 & 3\\
\hline
4 & 4 & 4 & 4 & 4 & 4 & 4 & 4 & 4 & 4\\
\hline
5 & 5 & 5 & 5 & 5 & 5 & 5 & 5 & 5 & 5\\
\hline
6 & 6 & 6 & 6 & 6 & 6 & 6 & 6 & 6 & 6\\
\hline
7 & 7 & 7 & 7 & 7 & 7 & 7 & 7 & 7 & 7\\
\hline
8 & 8 & 8 & 8 & 8 & 8 & 8 & 8 & 8 & 8\\
\hline
9 & 9 & 9 & 9 & 9 & 9 & 9 & 9 & 9 & 9\\
\hline
\end{tabular}}
\end{tabular}
\end{small}

According to the Theorem~3 and Theorem~5, we get a pair of absolutely maximally entangled bases and a product base as follows:
\begin{equation}
\nonumber
\centering
\left\{
\begin{array}{clllllllll}
|\phi^{1}_{n,m,0}\rangle=\frac{1}{\sqrt{10}}&
[|0\rangle(|00\rangle+\omega_{5}^{m}|14\rangle+\omega_{5}^{2m}|21\rangle+\omega_{5}^{3m}|37\rangle+\omega_{5}^{4m}|42\rangle)\\
&+\omega_{2}^{n}|1\rangle(|09\rangle+\omega_{5}^{m}|18\rangle+\omega_{5}^{2m}|23\rangle+\omega_{5}^{3m}|36\rangle+\omega_{5}^{4m}|45\rangle)]\\
|\phi^{1}_{n,m,1}\rangle=\frac{1}{\sqrt{10}}&[|0\rangle(|08\rangle+\omega_{5}^{m}|11\rangle+\omega_{5}^{2m}|25\rangle+\omega_{5}^{3m}|32\rangle+\omega_{5}^{4m}|47\rangle)\\
&+\omega_{2}^{n}|1\rangle(|03\rangle+\omega_{5}^{m}|19\rangle+\omega_{5}^{2m}|24\rangle+\omega_{5}^{3m}|30\rangle+\omega_{5}^{4m}|46\rangle)]\\
|\phi^{1}_{n,m,2}\rangle=\frac{1}{\sqrt{10}}&[|0\rangle(|09\rangle+\omega_{5}^{m}|18\rangle+\omega_{5}^{2m}|22\rangle+\omega_{5}^{3m}|36\rangle+\omega_{5}^{4m}|43\rangle)\\
&+\omega_{2}^{n}|1\rangle(|07\rangle+\omega_{5}^{m}|14\rangle+\omega_{5}^{2m}|25\rangle+\omega_{5}^{3m}|31\rangle+\omega_{5}^{4m}|40\rangle)]\\
|\phi^{1}_{n,m,3}\rangle=\frac{1}{\sqrt{10}}&[|0\rangle(|05\rangle+\omega_{5}^{m}|19\rangle+\omega_{5}^{2m}|28\rangle+\omega_{5}^{3m}|33\rangle+\omega_{5}^{4m}|40\rangle)\\
&+\omega_{2}^{n}|1\rangle(|04\rangle+\omega_{5}^{m}|17\rangle+\omega_{5}^{2m}|26\rangle+\omega_{5}^{3m}|32\rangle+\omega_{5}^{4m}|41\rangle)]\\
|\phi^{1}_{n,m,4}\rangle=\frac{1}{\sqrt{10}}&[|0\rangle(|07\rangle+\omega_{5}^{m}|16\rangle+\omega_{5}^{2m}|29\rangle+\omega_{5}^{3m}|38\rangle+\omega_{5}^{4m}|44\rangle)\\
&+\omega_{2}^{n}|1\rangle(|01\rangle+\omega_{5}^{m}|15\rangle+\omega_{5}^{2m}|20\rangle+\omega_{5}^{3m}|33\rangle+\omega_{5}^{4m}|42\rangle)]\\
|\phi^{1}_{n,m,5}\rangle=\frac{1}{\sqrt{10}}&[|0\rangle(|06\rangle+\omega_{5}^{m}|17\rangle+\omega_{5}^{2m}|20\rangle+\omega_{5}^{3m}|39\rangle+\omega_{5}^{4m}|48\rangle)\\
&+\omega_{2}^{n}|1\rangle(|05\rangle+\omega_{5}^{m}|12\rangle+\omega_{5}^{2m}|21\rangle+\omega_{5}^{3m}|34\rangle+\omega_{5}^{4m}|43\rangle)]\\
|\phi^{1}_{n,m,6}\rangle=\frac{1}{\sqrt{10}}&[|0\rangle(|03\rangle+\omega_{5}^{m}|10\rangle+\omega_{5}^{2m}|27\rangle+\omega_{5}^{3m}|31\rangle+\omega_{5}^{4m}|49\rangle)\\
&+\omega_{2}^{n}|1\rangle(|08\rangle+\omega_{5}^{m}|16\rangle+\omega_{5}^{2m}|22\rangle+\omega_{5}^{3m}|35\rangle+\omega_{5}^{4m}|44\rangle)]\\
|\phi^{1}_{n,m,7}\rangle=\frac{1}{\sqrt{10}}&[|0\rangle(|01\rangle+\omega_{5}^{m}|12\rangle+\omega_{5}^{2m}|23\rangle+\omega_{5}^{3m}|34\rangle+\omega_{5}^{4m}|45\rangle)\\
&+\omega_{2}^{n}|1\rangle(|06\rangle+\omega_{5}^{m}|10\rangle+\omega_{5}^{2m}|27\rangle+\omega_{5}^{3m}|38\rangle+\omega_{5}^{4m}|49\rangle)]\\
|\phi^{1}_{n,m,8}\rangle=\frac{1}{\sqrt{10}}&[|0\rangle(|02\rangle+\omega_{5}^{m}|13\rangle+\omega_{5}^{2m}|24\rangle+\omega_{5}^{3m}|35\rangle+\omega_{5}^{4m}|46\rangle)\\
&+\omega_{2}^{n}|1\rangle(|00\rangle+\omega_{5}^{m}|11\rangle+\omega_{5}^{2m}|28\rangle+\omega_{5}^{3m}|39\rangle+\omega_{5}^{4m}|47\rangle)]\\
|\phi^{1}_{n,m,9}\rangle=\frac{1}{\sqrt{10}}&[|0\rangle(|04\rangle+\omega_{5}^{m}|15\rangle+\omega_{5}^{2m}|26\rangle+\omega_{5}^{3m}|30\rangle+\omega_{5}^{4m}|41\rangle)\\
&+\omega_{2}^{n}|1\rangle(|02\rangle+\omega_{5}^{m}|13\rangle+\omega_{5}^{2m}|29\rangle+\omega_{5}^{3m}|37\rangle+\omega_{5}^{4m}|48\rangle)]
\end{array}
\right.
\end{equation}
\begin{equation}
\nonumber
\centering
\left\{
\begin{array}{clllllllll}
|\phi^{2}_{n,m,0}\rangle=\frac{1}{\sqrt{10}}&[|0\rangle(|00\rangle+\omega_{5}^{m}|11\rangle+\omega_{5}^{2m}|22\rangle+\omega_{5}^{3m}|33\rangle+\omega_{5}^{4m}|44\rangle)\\
&+\omega_{2}^{n}|1\rangle(|05\rangle+\omega_{5}^{m}|16\rangle+\omega_{5}^{2m}|27\rangle+\omega_{5}^{3m}|39\rangle+\omega_{5}^{4m}|48\rangle)]\\
|\phi^{2}_{n,m,1}\rangle=\frac{1}{\sqrt{10}}&[|0\rangle(|01\rangle+\omega_{5}^{m}|13\rangle+\omega_{5}^{2m}|26\rangle+\omega_{5}^{3m}|38\rangle+\omega_{5}^{4m}|40\rangle)\\
&+\omega_{2}^{n}|1\rangle(|07\rangle+\omega_{5}^{m}|19\rangle+\omega_{5}^{2m}|22\rangle+\omega_{5}^{3m}|34\rangle+\omega_{5}^{4m}|45\rangle)]\\
|\phi^{2}_{n,m,2}\rangle=\frac{1}{\sqrt{10}}&[|0\rangle(|02\rangle+\omega_{5}^{m}|15\rangle+\omega_{5}^{2m}|28\rangle+\omega_{5}^{3m}|36\rangle+\omega_{5}^{4m}|47\rangle)\\
&+\omega_{2}^{n}|1\rangle(|09\rangle+\omega_{5}^{m}|10\rangle+\omega_{5}^{2m}|21\rangle+\omega_{5}^{3m}|33\rangle+\omega_{5}^{4m}|44\rangle)]\\
|\phi^{2}_{n,m,3}\rangle=\frac{1}{\sqrt{10}}&[|0\rangle(|03\rangle+\omega_{5}^{m}|17\rangle+\omega_{5}^{2m}|29\rangle+\omega_{5}^{3m}|34\rangle+\omega_{5}^{4m}|46\rangle)\\
&+\omega_{2}^{n}|1\rangle(|02\rangle+\omega_{5}^{m}|18\rangle+\omega_{5}^{2m}|25\rangle+\omega_{5}^{3m}|30\rangle+\omega_{5}^{4m}|41\rangle)]\\
|\phi^{2}_{n,m,4}\rangle=\frac{1}{\sqrt{10}}&[|0\rangle(|04\rangle+\omega_{5}^{m}|18\rangle+\omega_{5}^{2m}|25\rangle+\omega_{5}^{3m}|37\rangle+\omega_{5}^{4m}|49\rangle)\\
&+\omega_{2}^{n}|1\rangle(|06\rangle+\omega_{5}^{m}|11\rangle+\omega_{5}^{2m}|20\rangle+\omega_{5}^{3m}|32\rangle+\omega_{5}^{4m}|43\rangle)]\\
|\phi^{2}_{n,m,5}\rangle=\frac{1}{\sqrt{10}}&[|0\rangle(|05\rangle+\omega_{5}^{m}|16\rangle+\omega_{5}^{2m}|20\rangle+\omega_{5}^{3m}|31\rangle+\omega_{5}^{4m}|42\rangle)\\
&+\omega_{2}^{n}|1\rangle(|03\rangle+\omega_{5}^{m}|14\rangle+\omega_{5}^{2m}|28\rangle+\omega_{5}^{3m}|37\rangle+\omega_{5}^{4m}|49\rangle)]\\
|\phi^{2}_{n,m,6}\rangle=\frac{1}{\sqrt{10}}&[|0\rangle(|06\rangle+\omega_{5}^{m}|10\rangle+\omega_{5}^{2m}|21\rangle+\omega_{5}^{3m}|32\rangle+\omega_{5}^{4m}|43\rangle)\\
&+\omega_{2}^{n}|1\rangle(|04\rangle+\omega_{5}^{m}|15\rangle+\omega_{5}^{2m}|29\rangle+\omega_{5}^{3m}|38\rangle+\omega_{5}^{4m}|47\rangle)]\\
|\phi^{2}_{n,m,7}\rangle=\frac{1}{\sqrt{10}}&[|0\rangle(|07\rangle+\omega_{5}^{m}|19\rangle+\omega_{5}^{2m}|23\rangle+\omega_{5}^{3m}|35\rangle+\omega_{5}^{4m}|41\rangle)\\
&+\omega_{2}^{n}|1\rangle(|08\rangle+\omega_{5}^{m}|12\rangle+\omega_{5}^{2m}|24\rangle+\omega_{5}^{3m}|36\rangle+\omega_{5}^{4m}|40\rangle)]\\
|\phi^{2}_{n,m,8}\rangle=\frac{1}{\sqrt{10}}&[|0\rangle(|08\rangle+\omega_{5}^{m}|14\rangle+\omega_{5}^{2m}|27\rangle+\omega_{5}^{3m}|39\rangle+\omega_{5}^{4m}|45\rangle)\\
&+\omega_{2}^{n}|1\rangle(|00\rangle+\omega_{5}^{m}|13\rangle+\omega_{5}^{2m}|26\rangle+\omega_{5}^{3m}|31\rangle+\omega_{5}^{4m}|42\rangle)]\\
|\phi^{2}_{n,m,9}\rangle=\frac{1}{\sqrt{10}}&[|0\rangle(|09\rangle+\omega_{5}^{m}|12\rangle+\omega_{5}^{2m}|24\rangle+\omega_{5}^{3m}|30\rangle+\omega_{5}^{4m}|48\rangle)\\
&+\omega_{2}^{n}|1\rangle(|01\rangle+\omega_{5}^{m}|17\rangle+\omega_{5}^{2m}|23\rangle+\omega_{5}^{3m}|35\rangle+\omega_{5}^{4m}|46\rangle)]
\end{array}
\right.
\end{equation}
\begin{equation}
\nonumber
\centering
\left\{
\begin{array}{clllllllll}
|\psi_{n,m,0}\rangle=\frac{1}{\sqrt{10}}[(|0\rangle+\omega_{2}^{n}|1\rangle)(|0\rangle+\omega_{5}^{m}|1\rangle+\omega_{5}^{2m}|2\rangle+\omega_{5}^{3m}|3\rangle+\omega_{5}^{4m}|4\rangle)|0\rangle]\\
|\psi_{n,m,1}\rangle=\frac{1}{\sqrt{10}}[(|0\rangle+\omega_{2}^{n}|1\rangle)(|0\rangle+\omega_{5}^{m}|1\rangle+\omega_{5}^{2m}|2\rangle+\omega_{5}^{3m}|3\rangle+\omega_{5}^{4m}|4\rangle)|1\rangle]\\
|\psi_{n,m,2}\rangle=\frac{1}{\sqrt{10}}[(|0\rangle+\omega_{2}^{n}|1\rangle)(|0\rangle+\omega_{5}^{m}|1\rangle+\omega_{5}^{2m}|2\rangle+\omega_{5}^{3m}|3\rangle+\omega_{5}^{4m}|4\rangle)|2\rangle]\\
|\psi_{n,m,3}\rangle=\frac{1}{\sqrt{10}}[(|0\rangle+\omega_{2}^{n}|1\rangle)(|0\rangle+\omega_{5}^{m}|1\rangle+\omega_{5}^{2m}|2\rangle+\omega_{5}^{3m}|3\rangle+\omega_{5}^{4m}|4\rangle)|3\rangle]\\
|\psi_{n,m,4}\rangle=\frac{1}{\sqrt{10}}[(|0\rangle+\omega_{2}^{n}|1\rangle)(|0\rangle+\omega_{5}^{m}|1\rangle+\omega_{5}^{2m}|2\rangle+\omega_{5}^{3m}|3\rangle+\omega_{5}^{4m}|4\rangle)|4\rangle]\\
|\psi_{n,m,5}\rangle=\frac{1}{\sqrt{10}}[(|0\rangle+\omega_{2}^{n}|1\rangle)(|0\rangle+\omega_{5}^{m}|1\rangle+\omega_{5}^{2m}|2\rangle+\omega_{5}^{3m}|3\rangle+\omega_{5}^{4m}|4\rangle)|5\rangle]\\
|\psi_{n,m,6}\rangle=\frac{1}{\sqrt{10}}[(|0\rangle+\omega_{2}^{n}|1\rangle)(|0\rangle+\omega_{5}^{m}|1\rangle+\omega_{5}^{2m}|2\rangle+\omega_{5}^{3m}|3\rangle+\omega_{5}^{4m}|4\rangle)|6\rangle]\\
|\psi_{n,m,7}\rangle=\frac{1}{\sqrt{10}}[(|0\rangle+\omega_{2}^{n}|1\rangle)(|0\rangle+\omega_{5}^{m}|1\rangle+\omega_{5}^{2m}|2\rangle+\omega_{5}^{3m}|3\rangle+\omega_{5}^{4m}|4\rangle)|7\rangle]\\
|\psi_{n,m,8}\rangle=\frac{1}{\sqrt{10}}[(|0\rangle+\omega_{2}^{n}|1\rangle)(|0\rangle+\omega_{5}^{m}|1\rangle+\omega_{5}^{2m}|2\rangle+\omega_{5}^{3m}|3\rangle+\omega_{5}^{4m}|4\rangle)|8\rangle]\\
|\psi_{n,m,9}\rangle=\frac{1}{\sqrt{10}}[(|0\rangle+\omega_{2}^{n}|1\rangle)(|0\rangle+\omega_{5}^{m}|1\rangle+\omega_{5}^{2m}|2\rangle+\omega_{5}^{3m}|3\rangle+\omega_{5}^{4m}|4\rangle)|9\rangle]\\
\end{array}
\right.
\end{equation}
where $0\leq n\leq1$, $0\leq m\leq4.$  All the above bases are mutually unbiased.
\end{example}

\section*{Data availability statement}
All data that support the findings of this study are included within the article.

\end{document}